\documentclass[11pt]{article}

\usepackage{amsmath,amssymb,amsfonts,amsthm}
\usepackage[english]{babel}
\usepackage[utf8]{inputenc}
\usepackage[T1]{fontenc}
\usepackage{hyperref}
\usepackage{graphicx}
\usepackage{color}
\usepackage{comment}
\usepackage{wrapfig}
\usepackage{subcaption}
\usepackage{authblk}
\usepackage{thmtools}
\usepackage{paralist}
\usepackage{xcolor}
\usepackage{geometry}
\geometry{hmargin=3cm,vmargin=3.5cm}
\usepackage[linesnumbered, ruled]{algorithm2e}
\usepackage{paralist}
\usepackage{xspace}
\usepackage{tikz}
\usepackage{float}
\usepackage{thm-restate}

\definecolor{darkgreen}{rgb}{0,0.5,0}

\def\cupp{\stackrel{.}{\cup}}

\renewcommand{\ge}{\geqslant}
\renewcommand{\geq}{\geqslant}
\renewcommand{\le}{\leqslant}
\renewcommand{\leq}{\leqslant}

\def\Cscr{\mathcal{C}}

\def\Lscr{\mathcal{L}}

\def\Uscr{\mathcal{U}}

\def\cupp{\,\dot{\cup}\,}
\def\G2{G_{\!/\!/}}


\newtheorem*{theo*}{Theorem}
\newtheorem*{lem*}{Lemma}
\newtheorem*{prop*}{Proposition}
\newtheorem*{def*}{definition}
\newtheorem*{cor*}{Corollary}
\newtheorem*{conj*}{Conjecture}

\newtheorem{theorem}{Theorem}

\newtheorem{lemma}{Lemma}

\newtheorem{corollary}{Corollary}

\newtheorem{definition}{Definition}

\theoremstyle{definition}

\newtheorem{ex**}{Example}


\newcommand{\NNC}{\textsc{NonNegativeCycles}\xspace}
\newcommand{\NNCs}{\textsc{NNC}\xspace}
\newcommand{\EDPs}{\textsc{EDP}\xspace}
\newcommand\EDP{\textsc{EdgeDisjointPaths}\xspace}

\title{An Approximation Algorithm for \\ Fully Planar Edge-Disjoint Paths}



\author{Chien-Chung Huang\thanks{CNRS, École Normale Supérieure, Université PSL, Paris, \texttt{chien-chung.huang@ens.fr}}, Mathieu Mari\thanks{École Normale Supérieure, Université PSL, Paris,  \texttt{mathieu.mari@ens.fr}}, Claire Mathieu\thanks{CNRS, IRIF, Paris-Diderot, \texttt{claire.mathieu@irif.fr}}, Kevin Schewior\thanks{Universität zu Köln, \texttt{kschewior@gmail.com}. Supported by a DAAD PRIME grant.}, and Jens Vygen\thanks{Research Institute for Discrete Mathematics, Hausdorff Center for Mathematics, University of Bonn, \texttt{vygen@or.uni-bonn.de}}}

%
\begin{document}
%

\maketitle

\begin{abstract} 
We devise a constant-factor approximation algorithm for the maximization version of
the edge-disjoint paths problem if the supply graph together with the demand edges form a planar graph. 
By planar duality this is equivalent to packing cuts in a planar graph such that each cut contains exactly one demand edge. 
We also show that the natural linear programming relaxations have constant integrality gap, yielding an approximate max-multiflow min-multicut theorem.
\end{abstract}

\section{Introduction}

The edge-disjoint paths problem ($\EDPs$) is a fundamental problem in combinatorial optimization, 
consisting of connecting as many demand pairs as possible in a graph via edge-disjoint paths.
There is a large body of literature studying this problem in various settings. 
A primary goal has been to find conditions under which there is a solution satisfying all demands, 
e.g.\ when the \emph{cut condition} is sufficient; see Frank's survey \cite{Frank1990} or part VII of Schrijver's book \cite{schrijver}. 
Unfortunately many cases of $\EDPs$ are NP-hard, so it is natural to look for approximation algorithms. 
However there is no general theory and constant-factor approximations can only be expected in special cases.

One of the landmark results in this area is due to Seymour.
Let $G+H$ denote the union (of the edge sets) of the supply graph $G$ and the demand  graph $H$. 
Seymour~\cite{Seymour1981} proved that if $G+H$ is \emph{planar and  Eulerian}, 
the {cut condition} guarantees a solution to connect all demand pairs; such a solution can be found in polynomial time. 
Seymour's pioneering result has motivated a sequence of follow-up works investigating $\EDPs$ when $G+H$ is planar. We refer 
the readers to Frank's surveys~\cite{Frank1990,Frank1996} and Schrijver's book~\cite[in particular Chapter 74.2]{schrijver} for an overview of these results. 
For example, one can decide in polynomial time whether all
demand pairs can be connected when $G+H$ is planar and the demand pairs lie on a bounded number of faces of $G$~\cite{Midd1993}. 

Unfortunately the \emph{general} case that $G+H$ is planar is one of these cases in which EDP is NP-hard, as Middendorf and Pfeiffer~\cite{Midd1993} proved.
Very little seems to be known about approximation in that setting. 
Korach and Penn \cite{KorachPenn1992} showed that, given the cut condition, one can satisfy all demands except one for each face of $G$, and such a solution can be found in polynomial time. 
However, this does not imply an approximation ratio in general.

\subsection{Our Results}

Before we present our main results, let us define the edge-disjoint paths problem formally:

\begin{definition}\label{definition:EDP}
The \EDP problem \textnormal{(\EDPs)} takes as input a \emph{supply} graph $G=(V,E)$ and a \emph{demand} graph $H=(V,D)$ and asks for 
a maximum-cardinality set of pairwise edge-disjoint cycles such that each cycle consists of an edge $\{u,v\}$ in $D$ and a path between $u$ and $v$ in $E$. 
\end{definition}

The conditions that the set $\mathcal{C}$ of cycles in $G+H$ must satisfy can equivalently be written as: 
(1) $\forall C \in \mathcal{C}$, $|C \cap D|=1$, 
(2) 
$\forall e \in D\cup E$, 
$| \{C\mid C\in \mathcal{C}, C \ni e\}| \leq 1$.
We consider the case when $G+H$ is planar and present the first constant-factor approximation algorithm:

\begin{theorem}  
There is a polynomial-time $32$-approximation algorithm 
for \EDP if $G+H$ is planar. 
\label{thm:mainResultOne}
\end{theorem}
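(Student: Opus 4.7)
The overall plan is LP-based and hinges on planar duality, which is available because $G+H$ is planar. First, I would write the natural cycle-packing LP: maximize $\sum_{C} x_C$ over valid cycles $C$ (those containing exactly one demand edge), subject to $\sum_{C \ni e} x_C \leq 1$ for every edge $e \in E \cup D$. By planar duality, each such cycle corresponds to a bond in the dual graph, so the problem equivalently asks for a packing of edge-disjoint cuts in the dual of $G+H$, each containing exactly one dual demand edge. The abstract's announcement of a constant integrality gap points to an algorithm that solves this LP (via its polynomially separable dual) and rounds the fractional solution using planar structure.

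My concrete plan is to reduce to a setting where Seymour's theorem applies. Given a fractional LP solution, I would (a) use it, together with the planar embedding of $G+H$, to partition the demand edges $D$ into a constant number of classes, each of which (combined with $G$) satisfies the cut condition with some slack; (b) for each class $D'$, double some supply edges of $G$ to render the supply graph planar and Eulerian while preserving the cut condition on $D'$ (losing a factor of $2$); (c) invoke Seymour's theorem to route all demands of $D'$ integrally by edge-disjoint paths; and (d) return the class yielding the most paths. A constant-factor approximation then follows because the LP is an upper bound on the optimum and the $O(1)$ classes together account for the full LP value.

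The hard part will be step (a): constructing a constant-size partition of $D$ that both preserves a constant fraction of the LP value and yields cut-condition satisfaction on each class. This should exploit the LP, which certifies a fractional cut condition, together with planarity, via the laminar/nested structure of minimum cuts in a planar graph. The factor $32$ strongly suggests that several factor-$2$ losses combine in the analysis: Eulerian doubling, selection among a constant number of classes, LP rounding, and halving steps to enforce cut-condition slack. None of these losses is individually surprising, but composing them while keeping planarity of $G+H$ and LP feasibility intact throughout is where the main technical effort should concentrate.
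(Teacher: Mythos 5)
Your high-level framing (the cycle-packing LP, planar duality turning $D$-cycles into $D$-cuts in the dual, and a constant integrality gap) matches the paper, but the concrete plan has two genuine gaps.

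First, step (b)/(c) does not work: Seymour's theorem requires $G+H$ -- not just $G$ -- to be planar and Eulerian. Doubling supply edges makes every vertex have degree $2d_G(v)+d_{H}(v)$ in the combined graph, which is still odd whenever the vertex has odd degree in the demand graph; so the doubling trick does not deliver the Eulerian hypothesis, and restricting to an even-degree demand subgraph can destroy essentially all of the value (e.g.\ when $D$ is a matching). Moreover, even with the cut condition in hand one cannot in general route a class completely: $K_4$ with $D$ a perfect matching satisfies the cut condition but admits no complete solution. The paper therefore does not use Seymour at all in this step; it only guarantees half of the selected demands, via the Edmonds--Johnson/Lov\'asz half-integral packing of $J$-cuts followed by a recursive extraction of pairwise disjoint cuts (in the spirit of Korach--Penn), with the four color theorem providing the constant.

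Second, step (a) -- partitioning $D$ into constantly many classes, each satisfying the cut condition while preserving a constant fraction of the LP value -- is exactly where the entire difficulty sits, and you give no construction. The paper does something weaker and different: it extracts a \emph{single} subset $J$ capturing a $1/16$ fraction of the LP value and satisfying the cycle condition in the dual. Even this selection problem (the paper's \NNCs) is NP-hard when $G+H$ is planar, and the rounding is delicate: it exploits that the dual demand edges form a forest, runs two competing algorithms (one picking leaf edges via a 4-coloring, one redistributing LP budget along root-to-leaf paths with artificial edges), and again invokes the four color theorem. A laminar-min-cut argument of the kind you gesture at is not what makes this step go through. So the proposal identifies the right targets but does not constitute a proof; both the rounding and the routing steps need entirely different machinery.
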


We prove this theorem in Sections \ref{section:roadmap}, \ref{section:rounding}, and \ref{section:jpacking}. 
Our proof is based on rounding a fractional solution to a certain LP relaxation to obtain a subset of demand edges for which the cut condition is satisfied, i.e., no cut has more of these demand edges than supply edges: in the planar dual, this translates into what we call the $\NNC$ problem. 
The $\NNC$ problem is NP-hard even when $G+H$ is planar (Section~\ref{section:NNCisNPcomplete}), but we give an approximation algorithm. 
Once this rounding is done and we have an approximate solution of $\NNC$, we can obtain edge-disjoint paths for at least half of these demand edges by building on ideas of Korach and Penn \cite{KorachPenn1992}.
We use the four color theorem in several places.

\medskip

Section \ref{section:LPs} analyzes the integrality gaps of various LPs. 
By comparing the output of our algorithm to the value
of the natural LP relaxation, we get an upper bound of 32 on its integrality gap. 
A special case, when every demand edge has an infinite (or large enough) number of parallel copies, has been called \emph{maximum integer multiflow}.
In this case the dual LP is a relaxation of the multicut problem,
which asks for a smallest set of supply edges whose deletion destroys all paths for all demand edges.
We show that this dual LP has an integrality gap of at most 2 if $G+H$ is planar. This yields:

\begin{restatable}{theorem}{mainresulttwo}
\label{thm:mainResultTwo}
If $G+H$ is planar, the minimum cardinality of a multicut is at most 64 times the maximum value of an integer multiflow.
\end{restatable}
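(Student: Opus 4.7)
The plan is to combine LP duality with two integrality-gap bounds, one on each side of the min--max relation. Let $\nu$ denote the maximum value of an integer multiflow and $\tau$ the minimum cardinality of a multicut; write $\nu^{*}$ and $\tau^{*}$ for the optimal values of the natural LP relaxation of maximum multiflow and of its dual (the LP relaxation of minimum multicut). Standard LP duality gives $\nu^{*} = \tau^{*}$, and weak duality already yields $\nu \le \nu^{*} = \tau^{*} \le \tau$.

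The first ingredient is the primal-side integrality gap. Theorem~\ref{thm:mainResultOne}---more precisely, the fact that the approximation algorithm compares its output against the LP optimum---shows that the natural multiflow LP has integrality gap at most $32$; hence $\nu^{*} \le 32\,\nu$. This remains valid in the special case where every demand edge has arbitrarily many parallel copies, which is precisely the setting of maximum integer multiflow.

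The second ingredient, established in Section~\ref{section:LPs}, is that if $G+H$ is planar then the dual LP has integrality gap at most $2$, i.e., $\tau \le 2\,\tau^{*}$. Chaining the three inequalities then yields
\[
\tau \;\le\; 2\,\tau^{*} \;=\; 2\,\nu^{*} \;\le\; 2 \cdot 32 \cdot \nu \;=\; 64\,\nu,
\]
which is precisely the statement of Theorem~\ref{thm:mainResultTwo}.

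The assembly above is a single line; the substantive work sits in the two integrality-gap ingredients. The primal bound $\nu^{*} \le 32\,\nu$ is the content of Theorem~\ref{thm:mainResultOne} together with its rounding analysis, so nothing new is required there. The main conceptual obstacle is the factor-$2$ bound on the multicut LP: one naturally expects a planar-duality argument (cuts in $G+H$ correspond to cycles in the dual) combined with a decomposition or coloring step---plausibly an application of the four color theorem, as hinted in the introduction---so that an optimal fractional multicut can be written as a convex combination of a small number of integral multicuts, at most two of which are needed to cover it.
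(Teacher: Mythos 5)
Your proposal is correct and follows essentially the same route as the paper: LP duality between the multiflow and multicut LPs, the primal integrality gap of $32$ inherited from the analysis behind Theorem~\ref{thm:mainResultOne} (transferred to the infinite-capacity multiflow setting by edge-splitting reductions that preserve planarity), and the dual integrality gap of $2$ from Lemma~\ref{lemma:dualgap}, chained exactly as you write. One small correction to your closing speculation: the factor-$2$ bound on the $D$-cut covering LP is not obtained via planar duality and the four color theorem, but by rewriting that LP as a survivable network design LP with a proper requirement function and invoking Jain's iterative rounding theorem --- so planarity plays no role in that bound itself, only in translating between the cut and cycle formulations; since you cite rather than re-prove that ingredient, this does not affect the validity of your argument.
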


 This is one of the few cases with a constant upper bound on the ratio of the smallest multicut and the largest \emph{integer} multiflow (see, e.g., \cite{Vazirani01}). 
Another such case is when $G$ arises from a tree by duplicating edges; then this ratio is 2 \cite{Garg1997}; 
in general the ratio can be as large as $\Theta(|D|)$, even when $G$ is planar.

\subsection{Further Related Work}

\paragraph{Hardness.} 
The decision version of $\EDPs$ is NP-complete~\cite{Karp75}, even in many special cases~\cite{Naves2008}.
In terms of approximation, $\EDPs$ is APX-hard~\cite{Chuzhoy05}.
Assuming that NP$\not\subseteq$DTIME$(n^{O(\log n)})$, where $n=|V|$,
there is no $2^{o(\sqrt{\log n})}$approximation for $\EDPs$, 
even when $G$ is planar and subcubic and each demand edge has one of its endpoints on the outer face of $G$~\cite{ChuzhoyKN17}. 
Assuming that for some $\delta>0$, not all problems in NP can be solved in randomized time $2^{n^\delta}$, 
there is no $n^{O(1/(\log\log n)^2)}$-approximation 
even when $G$ is a wall graph~\cite{ChuzhoyKN18}.

 The difficulty  is further illustrated by the integrality gap of a natural LP relaxation: 
 even when $G$ is planar and subcubic, the integrality gap is already in the order of $\Theta(\sqrt{n})$~\cite{Garg1997}. 

\paragraph{Positive results.} $\EDPs$ can be solved in polynomial time when the number of demand edges is bounded by a constant~\cite{ROBERTSON199565}. 
The best known approximation guarantee is $O(\sqrt{n})$~\cite{chekurikhannashepherd06},
even when $G$ is planar.

Nonetheless, there are some special cases for which 
significantly better approximation ratios are known: for instance, 
when $G$ is planar and Eulerian, or planar and 4-edge-connected, there is an $O(\log n)$-approximation~\cite{KawarabayashiK13}.  
When $G$ is a grid, there is an $O(\log^2 n)$-approximation~\cite{AumannR95}.  
When $G$ is a wall graph, there is a $\tilde{O}(n^{1/4})$-approximation~\cite{ChuzhoyK15}. 
When in addition all demand edges have one endpoint at the boundary of the wall, there is an $2^{O(\sqrt{\log n}\cdot \log\log n)}$-approximation~\cite{ChuzhoyKN18}. 
Yet none of those results gets to the range of a constant-factor approximation.

\paragraph{Variants.} One way to relax $\EDPs$ is to allow for congestion $c$, 
that is, in the solution, up to $c$ paths may share the same edge. 
It is known that EDP becomes significantly easier with congestion: with congestion 2, there is a polylogarithmic approximation~\cite{Chuzhoy12a}, 
and when in addition $G$ is planar, there is a constant-factor approximation~\cite{Seguin-CharbonneauS11}.

A closely-related but more difficult problem is the Node-Disjoint Paths (NDP) problem, in which 
two paths may not even share a common \emph{node}. 
$\EDPs$ can be reduced to NDP by taking the line graph of the supply graph, 
but this reduction does not preserve planarity.
If $G+H$ is planar, however, Middendorf and Pfeiffer showed how to reduce \EDPs to NDP while
preserving planarity. 
Hence the $\tilde{O}(n^{9/19})$-approximation algorithm for 
NDP with planar supply graphs~\cite{chuzhoy2016} implies the same approximation ratio for \EDPs with $G+H$ planar.

Another more general version of EDP is its directed version (one can reduce EDP to its directed version by replacing each edge by a simple gadget).
The directed version seems to be strictly harder than the undirected EDP; for example it is even NP-hard for two demand edges~\cite{FORTUNE1980111}.

\section{Preliminaries}

We recall here some notions and well-known properties related to cuts, cycles, planar duality, and $T$-joins. Many readers may want to skip this section.

\subsection{Cycles and Cuts}
Throughout this work we work with a (multi)graph $(V,D \cupp E)$ whose edge set is partitioned into two sets, $D$ and $E$. 
For example, we write $G+H=(V,D \cupp E)$, taking the union of the edge set of the supply graph $G=(V,E)$ and of the demand graph $H=(V,D)$. 
Each of these graphs can have parallel edges, and further parallel edges can arise by taking the (disjoint) union.
We assume throughout, without loss of generality, that $G+H$ is connected.

A \emph{cycle} (or circuit) in $(V,D\cupp E)$ is a set $C=\{e_1,\ldots,e_k\}$ of edges 
for which there is a set $V(C)=\{v_1,\ldots,v_k\}$ of $k$ distinct vertices such that $e_i$ has endpoints $v_i$ and $v_{i+1}$ 
for all $i=1,\ldots,k$ (where $v_{k+1}:=v_1$). 
If the graph has loops, every loop would also constitute a cycle.

A $D$-cycle is a cycle that contains exactly one edge of $D$.
Then $\EDP$ asks for a maximum number of pairwise disjoint \emph{$D$-cycles}.

We say an instance of $\EDPs$ has a \emph{complete solution} if there are $|D|$ pairwise disjoint $D$-cycles.
For the existence of a complete solution, a necessary condition is the \emph{cut condition}:
\begin{equation}
\label{eq:cutcondition}
|C\cap D|\le |C\cap E| \text{ for every cut }C.  
\end{equation}
It is necessary because every cycle intersects every cut in an even number of edges. 
In fact it is equivalent to requiring \eqref{eq:cutcondition} only 
for \emph{simple} cuts, i.e., edge sets $\delta(U)$ where both $U$ and $V\setminus U$ induce connected subgraphs. A cut is simple if and only if it is a minimal cut.

The cut condition is very useful although in general it is NP-hard to check (this is an unpublished observation of Seb{\H o}).
The reason is that in many special cases,
e.g.\ when $G+H$ is planar and Eulerian \cite{Seymour1981}, it is sufficient and then can also be checked in polynomial time.
However, the cut condition is not sufficient in general, even when $G+H$ is planar. 
For instance, if $G+H$ is $K_4$, the complete graph on four vertices, and $D$ is a perfect matching, then the cut condition is satisfied but there is no complete solution.

\subsection{Using Planarity}

We exploit planarity in two different ways.
First, we need an algorithmic version of the four color theorem.
Robertson, Sanders, Seymour, and Thomas \cite{Robertson1997} improved on the original
proof by Appel, Haken, and Koch and showed:

\begin{theorem}
Given a planar graph, there exists an $O(n^2)$-time algorithm to 
color its vertices with four colors such that no two adjacent vertices have the same color.
\label{thm:4color}
\end{theorem}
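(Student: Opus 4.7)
The statement is the algorithmic version of the Four Color Theorem, and a self-contained proof is well beyond the scope of a short sketch: both the original argument of Appel and Haken and its refinement by Robertson, Sanders, Seymour, and Thomas rely on a substantial computer-assisted case analysis. Still, I can outline the classical strategy and how it yields an $O(n^2)$ algorithm.

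The plan is based on combining discharging with reducibility. First I would reduce to the case of an internally 6-connected planar triangulation: any planar graph can be embedded and all faces triangulated without destroying planarity, and 4-colorability of the triangulation transfers back to the original graph; separating triangles and small cuts can be handled by contracting and recursing on the pieces. Next I would exhibit an \emph{unavoidable set} $\mathcal{U}$ of configurations, i.e., a finite list of small planar subgraphs such that every internally 6-connected planar triangulation must contain at least one of them. Existence of $\mathcal{U}$ is established by a discharging argument: assign each vertex $v$ of degree $d(v)$ an initial charge $6-d(v)$; by Euler's formula the total charge is a strictly positive constant, and one designs redistribution rules so that no vertex can finish with positive charge unless the graph contains a configuration from $\mathcal{U}$.

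The hard part, which historically is executed by a computer, is to show that every configuration in $\mathcal{U}$ is \emph{reducible}: any 4-coloring of its boundary that extends to the rest of the graph can, after modifications by Kempe chain exchanges, be extended across the configuration itself. Unavoidability together with reducibility rules out a minimum counterexample. For the $O(n^2)$ running time one iterates: locate an occurrence of some configuration in $\mathcal{U}$ (possible in linear time since $\mathcal{U}$ is finite and each configuration has bounded size), contract it to obtain a smaller planar graph, recursively 4-color that graph, and extend the coloring back using the precomputed reducibility table in $O(1)$ per step. Since each iteration reduces the vertex count by a constant, $O(n)$ iterations suffice, each costing $O(n)$, for total time $O(n^2)$.

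The principal obstacle, naturally, is the reducibility verification of every configuration in $\mathcal{U}$: even in the Robertson--Sanders--Seymour--Thomas proof this requires hundreds of configurations and cannot be carried out by hand. In a sketch one has to treat this step as a black box; the contribution of the algorithmic version is showing that once $\mathcal{U}$ and the associated reducibility data are compiled offline, the online coloring procedure runs in quadratic time.
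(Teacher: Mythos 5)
The paper does not prove this statement at all---it is quoted verbatim as a known result of Robertson, Sanders, Seymour, and Thomas \cite{Robertson1997} and used as a black box. Your sketch is a faithful outline of exactly that proof (reduction to internally 6-connected triangulations, an unavoidable set established by discharging, computer-verified reducibility, and the contract-recurse-extend scheme giving $O(n)$ iterations of $O(n)$ work each), so it takes the same approach as the cited source and is as complete as one could reasonably be without reproducing the machine-checked case analysis.
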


Second, we use planar duality. 
It is well known that for any connected planar graph $G$ there is a connected planar graph $G^*$ (a planar dual) 
such that there is a one-to-one-correspondence between the edges
of $G$ and the edges of $G^*$ that maps the cycles of $G$ to the simple cuts of $G^*$. 
Such a graph can be computed in polynomial time (see, e.g., \cite{MoharT01}). 
The planar dual is not always unique, but we can take an arbitrary one.
We will not distinguish between edges and edge sets and their images in the planar dual.

Let us call a simple cut that contains exactly one edge of $D$ a \emph{$D$-cut}. 
Then an equivalent formulation of $\EDPs$, if $G+H$ is planar, asks for finding a maximum number of pairwise disjoint $D$-cuts in the planar dual graph $(G+H)^*$.
The cut condition translates to the \emph{cycle condition} in the planar dual graph:
\begin{equation}
\label{eq:fullcyclecondition}
|C\cap D|\le |C\cap E| \text{ for every cycle }C.  
\end{equation}

Deleting an edge in a planar graph is equivalent to contracting the corresponding edge in the planar dual. 
These operations preserve planarity.

\subsection{\boldmath $T$-Joins}

Condition \eqref{eq:fullcyclecondition} is well known in the context of $T$-joins.
For an edge set $F$, let $\text{odd}(F)$ denote the set of vertices
whose degree is odd in $(V,F)$.
For a subset $T\subseteq V$, a \emph{$T$-join}
is a set $F$ of edges with $\text{odd}(F)=T$.
If $F$ is a minimum-cardinality $\text{odd}(F)$-join,
then $F$ is called a \emph{join}.
Guan \cite{Guan1962} observed that $F$ is a join if and only if 
\begin{equation}
\label{eq:guancyclecondition}
|C\cap F|\le |C\setminus F| \text{ for every cycle }C. 
\end{equation}
For any connected graph, any $T$ with $|T|$ even, and any real edge weights, a minimum-weight $T$-join can be computed in polynomial time \cite{EdmondsJohnson1973}.

If we have a set of pairwise disjoint $D$-cuts and $D'\subseteq D$ is the set of demand edges that belong to one of these cuts, then we must have
\begin{equation}
\label{eq:subsetcyclecondition}
|C\cap D'|\le |C\cap E| \text{ for every cycle }C.  
\end{equation}

This is equivalent to saying that $D'$ is a join after contracting the edges in $D\setminus D'$.
Although this condition is again not sufficient,
we will first find a set $D'\subseteq D$ satisfying \eqref{eq:subsetcyclecondition}
(in the planar dual graph).

\section{Roadmap}
\label{section:roadmap}

\subsection{Satisfying the Cycle Condition}

The preceding considerations motivate the following problem in the planar dual graph.

\begin{definition}\label{definition:NNC}
The \NNC problem (\NNCs) takes as input a supply graph $G=(V,E)$ and a demand graph $H=(V,D)$ 
and asks for a maximum-cardinality set $D'\subseteq D$ of demand edges such that
$$
|C\cap D'|\le |C\cap E| \text{ for every cycle } C \text{ in } G+H.
$$
\end{definition}

In other words, if edges in $E$ have weight $1$, edges in $D'$ have weight $-1$, and edges in $D\setminus D'$ have weight $0$, then every cycle must have nonnegative total weight. This problem is NP-hard even when $G+H$ is planar; 
see Section \ref{section:NNCisNPcomplete}. 
We devise a ${16}$-approximation algorithm if $G+H$ is planar.
It begins by solving the following linear programming relaxation,
which we call the \emph{nonnegative cycle LP}:
\begin{align}
\label{LP:NNC}
\max \sum_{e\in D} x_e & \\
\sum_{e \in C\cap D} x_e \le  |C\cap E|  &  \qquad \forall \text{ cycles } C \notag \\
0 \le x_e \le 1  & \qquad \forall e \in D \notag
\end{align}

\begin{lemma}
\label{lemma:solvelp}
The nonnegative cycle LP~(\ref{LP:NNC}) can be solved in polynomial time.
\end{lemma}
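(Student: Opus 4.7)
The plan is to solve~(\ref{LP:NNC}) by the ellipsoid method. The LP has only $|D|$ variables but exponentially many cycle constraints, so it suffices to design a polynomial-time separation oracle; the box constraints $0 \le x_e \le 1$ are polynomially many and can be verified directly.

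Given a candidate $\bar{x} \in [0,1]^D$, I assign edge weights $w_e := 1$ for $e \in E$ and $w_e := -\bar{x}_e$ for $e \in D$. The cycle inequality for $C$ is violated precisely when $w(C) := \sum_{e \in C} w_e < 0$, so separation reduces to finding a negative-weight cycle in $(V, D \cupp E)$ with respect to $w$, or certifying that none exists. Every cycle is an $\emptyset$-join, and conversely every $\emptyset$-join decomposes in linear time (by repeated cycle extraction) into edge-disjoint cycles whose weights sum to $w(F)$; hence a negative cycle exists if and only if the minimum-weight $\emptyset$-join has negative total weight.

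A minimum-weight $\emptyset$-join with respect to arbitrary real weights can be computed in polynomial time by the Edmonds--Johnson algorithm~\cite{EdmondsJohnson1973}, as recalled in the preliminaries. If the resulting minimum is negative, I decompose the corresponding $\emptyset$-join into cycles and return any one of negative weight as the violated constraint; otherwise no cycle constraint is violated and $\bar{x}$ is declared feasible. Feeding this oracle into the ellipsoid method yields an optimal fractional solution to~(\ref{LP:NNC}) in polynomial time. The only step that requires any care is the equivalence between negative cycles and negative $\emptyset$-joins, but both directions are essentially immediate: a cycle is itself an $\emptyset$-join, and a negative-weight $\emptyset$-join must contain at least one cycle of negative weight in its decomposition. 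Note that planarity of $G+H$ plays no role in this lemma; it enters only in the subsequent rounding phase.
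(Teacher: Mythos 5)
Your proposal is correct and follows essentially the same route as the paper: separation via the ellipsoid method, with the separation oracle implemented by computing a minimum-weight $\emptyset$-join under weights $-x_e$ on $D$ and $1$ on $E$, and extracting a negative cycle from its decomposition if the minimum is negative. The paper's proof is a condensed version of exactly this argument.
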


\begin{proof}
By the equivalence of optimization and separation \cite{GLS1981},
it is sufficient to solve the separation problem.
To this end, given $x\in\mathbb{R}^D$, 
define weights $w(e):=-x_e$ for $e\in D$ and $w(e):=1$ for $e\in E$.
Then the separation problem reduces to finding a negative-weight
cycle or deciding that there is none.
This can be done by computing a minimum-weight $\emptyset$-join $J$.
If $w(J)\ge 0$, then there is no negative-weight cycle.
Otherwise $J$ can be decomposed into cycles, and at least one of them will have negative weight. 
\end{proof}

The second step of our algorithm will round the LP solution.
In Section \ref{section:rounding} we will prove:

\begin{restatable}{theorem}{theoremstepone}
Let $x$ be a feasible solution to the nonnegative cycle LP~(\ref{LP:NNC}). 
Then there is a polynomial-time algorithm to construct an integral feasible solution 
that has value at least $\frac{1}{16}\sum_{e\in D} x_e$.
\label{thm:s1ands2}
\end{restatable}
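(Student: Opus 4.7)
The plan is to round the feasible LP solution $x$ to an integer NNC-feasible solution by combining LP scaling, the four color theorem, and a careful rounding step that exploits planarity of $G+H$.

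First, I would scale $x$ down by a factor $1/2$ to create slack in the cycle constraints: the scaled vector $x/2$ satisfies $\sum_{e \in C \cap D} x_e/2 \le |C \cap E|/2$ for every cycle $C$ in $G+H$, leaving a ``budget'' of $|C\cap E|/2$ to absorb integrality errors. This loses a factor of $2$ in the objective. (Alternatively, one could keep $x$ as-is but pay for slack later in the rounding.)

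Second, I would apply the four color theorem (Theorem \ref{thm:4color}) to $G+H$ or a naturally derived planar graph (perhaps its dual, so that faces of $G+H$ become colored vertices). This yields a partition of the demand edges into a constant number of classes — for example, $\binom{4}{2}=6$ classes by the colors of the two faces incident to each edge, or $4$ classes indexed by a single color. By averaging, some class $D^*$ accumulates at least a $1/4$ (or $1/6$) fraction of $\sum_e x_e/2$, and I would restrict attention to the edges in $D^*$. The hope is that edges belonging to a single color class interact with cycles in a controlled way: e.g., along any cycle of $G+H$, the edges in $D^*$ occupy a predictable combinatorial pattern relative to the coloring of the enclosed faces.

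Third, within the chosen class I would round the fractional values on $D^*$ to integers, either deterministically (by a potential-function argument using the LP slack) or by a correlated random procedure. Exploiting the slack from Step~1, this step should cost at most another factor of $2$. Combining the three losses gives $2\cdot 4 \cdot 2 = 16$, matching the target bound.

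The main obstacle, I expect, is verifying feasibility for \emph{all} cycles simultaneously in the rounding step: the cycle condition involves exponentially many constraints, and the four-coloring gives a partition of edges but not an obvious bound on general cycle intersections. My strategy would be to exploit the fact that the cycle space of a planar graph is generated by face boundaries, so it suffices to control a face-based condition and then amplify to arbitrary cycles via the slack introduced in Step~1. Managing this interplay between LP slack, the four-coloring, and the planar cycle space is where the heart of the argument must lie.
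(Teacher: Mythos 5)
There is a genuine gap here: your sketch names plausible tools (LP slack, the four color theorem) but defers exactly the step where the difficulty lives, and the specific mechanism you propose for that step does not work. The claim that ``it suffices to control a face-based condition and then amplify to arbitrary cycles'' fails because the constraint $|C\cap D'|\le |C\cap E|$ is not additive under symmetric differences of cycles: writing it as $\sum_{e\in C} w_e\ge 0$ with $w_e=-1$ on $D'$, $w_e=+1$ on $E$, the weight of a symmetric difference of two cycles is \emph{not} the sum of their weights (edges in the intersection cancel rather than add). This is precisely Guan's join condition, a global condition on all cycles; verifying it on face boundaries gives essentially nothing, and no amount of multiplicative slack from scaling repairs this, because a single cycle can traverse a long path of consecutive demand edges whose chosen cardinality must be charged against fractional mass \emph{on that same path}, not against $|C\cap E|$ alone.

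The paper's proof supplies exactly the machinery your sketch is missing. It first reduces to the case where $(V,D)$ is a forest, then runs \emph{two} algorithms and keeps the better output. The first handles demand edges at leaves: a 4-coloring of $G$ yields an independent set of leaves, and feasibility follows because each selected edge on a cycle consumes a distinct supply edge at its leaf endpoint. The second handles the internal structure via a budget-transfer scheme: artificial zero-value edges are inserted at branch vertices, fractional budget is moved along tree paths (up to $1$ onto artificial edges, up to $2$ onto selected edges), and the selected set is the edges retaining positive budget, thinned by a 4-coloring of the graph obtained by contracting each tree of $(V,D)$. The crux is the path inequality $\sum_{e\in P}x_e\ge 2|S_2\cap P|-2$ for every path $P$ in the forest, which, combined with the fact that the contraction-coloring forces every $E$-segment of a violating cycle to have at least two edges, yields feasibility for \emph{all} cycles at once. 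The factor $16$ then comes out of the accounting $\sum_e x_e\le 2|S_2^0|+2|L|\le 8(|S_1|+|S_2|)$, not from a $2\cdot 4\cdot 2$ cascade of independent losses. Without an analogue of the path inequality and the leaf/internal split, your third step has no way to certify feasibility.
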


\subsection{\boldmath Packing $T$-Cuts}

Let $x$ be an integral feasible solution to the nonnegative cycle LP
(in the planar dual), and let $J\subseteq D$ contain those demand edges $e$ for which $x_e=1$.
If we delete the other demand edges (or contract them in the planar dual), the resulting instance now satisfies
the cut condition (the cycle condition in the planar dual). 
Let $G'=(V,J\cupp E)^*$ be the graph that arises from the planar dual by contracting the edges in $D\setminus J$.
Then $J$ is a join in $G'$.
Our goal is now to find as many pairwise disjoint $J$-cuts in $G'$ as possible 
because after uncontracting, these will correspond to pairwise disjoint $D$-cuts in $(G+H)^*$ and hence to pairwise disjoint $D$-cycles in $G+H$.

Edmonds and Johnson \cite{EdmondsJohnson1973} and Lov\'asz \cite{Lovasz1975} showed that
a perfect half-integral packing of $J$-cuts exists
and can be computed in polynomial time:

\begin{theorem}
\label{thm:2packingofTcuts}
For every graph $G$ and every join $J$ in $G$
there exist vertex sets $U_1,\ldots,U_{2|J|}$ (not necessarily distinct)
that form a laminar family, such that $|\delta(U_i)\cap J|=1$ for all $i$ 
and for every edge $e$ of $G$ there are at most two indices $i$ with $e\in\delta(U_i)$.
Such sets can be computed in polynomial time.
\end{theorem}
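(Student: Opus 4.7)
The plan is to derive the $T$-cut packing from LP duality for $T$-joins, and then apply uncrossing to obtain the laminar structure. To start, I would set $T := \mathrm{odd}(J)$. Since $J$ is a join, Guan's characterization \eqref{eq:guancyclecondition} says that $J$ is a minimum-cardinality $T$-join. I would then invoke the theorem of Edmonds and Johnson describing the $T$-join polytope by $x \ge 0$ together with $\sum_{e \in \delta(U)} x_e \ge 1$ for every $T$-cut $\delta(U)$ (i.e.\ every $U \subseteq V$ with $|U \cap T|$ odd). By LP duality, the optimum of the $T$-cut packing LP
\[
\max \sum_{U:\, |U \cap T| \text{ odd}} y_U \quad \text{s.t. } \sum_{U:\, e \in \delta(U)} y_U \le 1 \text{ for all } e,\ y \ge 0,
\]
equals $|J|$, and the classical result of Lov\'asz guarantees a half-integral optimum $y$, computable in polynomial time via the algorithm of Padberg and Rao.

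Next, I would argue combinatorially that such a packing has the desired structure. Every $T$-cut meets $J$ (because $J$ is a $T$-join), so each $U$ with $y_U > 0$ satisfies $|\delta(U) \cap J| \ge 1$. The chain
\[
|J| \;=\; \sum_U y_U \;\le\; \sum_U y_U \, |\delta(U) \cap J| \;=\; \sum_{e \in J}\sum_{U:\, e \in \delta(U)} y_U \;\le\; |J|
\]
must then be tight, forcing $|\delta(U) \cap J| = 1$ for every $U$ in the support and $\sum_{U:\, e \in \delta(U)} y_U = 1$ for every $e \in J$. Listing each support set with multiplicity $2 y_U \in \{1, 2\}$ produces a collection $U_1, \ldots, U_{2|J|}$ such that the doubled packing constraint yields at most two indices $i$ with $e \in \delta(U_i)$ for each edge $e$, while each $U_i$ still satisfies $|\delta(U_i) \cap J| = 1$.

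Finally, I would upgrade this family to a laminar one by uncrossing, which I expect to be the main obstacle since one must preserve both the $T$-odd parity and the edge-load constraint. If $U_i$ and $U_j$ cross, a parity argument on $|{\cdot} \cap T|$ shows that either both $U_i \cap U_j$ and $U_i \cup U_j$ are $T$-odd, or both $U_i \setminus U_j$ and $U_j \setminus U_i$ are; in each case the submodularity of $|\delta(\cdot)|$ (and its symmetric-difference analogue) guarantees that replacing $\{U_i, U_j\}$ by the corresponding uncrossed pair does not increase the load on any edge, while $|\delta(\cdot) \cap J| \ge 1$ continues to hold on each new set (so the tightness argument above keeps the $J$-intersection equal to $1$). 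Choosing a potential such as $\sum_i |U_i|\cdot|V \setminus U_i|$ that strictly decreases at every uncrossing then gives termination in polynomial time and yields the laminar family claimed by the theorem.
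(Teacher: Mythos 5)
The paper itself does not prove this theorem: it is quoted as a known result of Edmonds and Johnson \cite{EdmondsJohnson1973} and Lov\'asz \cite{Lovasz1975}, so there is no in-paper proof to compare against. Your reconstruction is exactly the classical argument behind that citation --- integrality of (the dominant of) the $T$-join polyhedron, LP duality giving a fractional $T$-cut packing of value $|J|$, Lov\'asz's half-integrality, the tightness chain forcing $|\delta(U)\cap J|=1$ and unit load on every $e\in J$, and uncrossing --- and it is essentially correct. The one spot needing care is where you anticipated it: the potential $\sum_i |U_i|\cdot|V\setminus U_i|$ strictly decreases (by $2\,|A\setminus B|\,|B\setminus A|$) when a crossing pair is replaced by intersection and union, but in the difference case it drops only by $2\,|A\cap B|\,|V\setminus(A\cup B)|$, which is zero when $A\cup B=V$, so as stated termination is not guaranteed. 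A standard fix: first fix a vertex $r$ and replace every set containing $r$ by its complement (this leaves every cut $\delta(U_i)$ unchanged and preserves $T$-oddness since $|T|$ is even); then all sets avoid $r$, cross-free coincides with laminar for this family, and both uncrossing operations strictly decrease the potential, giving polynomial termination.
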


This theorem has been called the $T$-cut packing theorem because if $J$ is a $T$-join (for some \emph{vertex} set $T$)
then the cuts will be $T$-cuts, i.e., cuts $\delta(U)$ with $|U\cap T|$ odd. However, we prefer to continue speaking of $J$-cuts because this is what we need. 
See \cite{Frank1996} and Theorem \ref{thm:algorithmicpackingofTcuts} for extensions of this result.
However, it is not easy (and without the planarity assumption in general impossible) to deduce from
this a large number of pairwise disjoint $J$-cuts.
Nevertheless we will show in Section \ref{section:jpacking}:

\begin{restatable}{theorem}{theoremquarter}
 Let $G$ be a planar graph and $J$ a join in $G$.
 Then there is a family of at least $\frac{|J|}{2}$ pairwise disjoint $J$-cuts in $G$, 
 and such a family can be computed in polynomial time.
 \label{thm:quarter}
\end{restatable}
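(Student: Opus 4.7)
The plan is to combine the $T$-cut packing theorem (Theorem~\ref{thm:2packingofTcuts}) with the four color theorem (Theorem~\ref{thm:4color}). First I apply Theorem~\ref{thm:2packingofTcuts} to obtain a laminar family $U_1,\ldots,U_{2|J|}$ of vertex sets whose cuts $\delta(U_i)$ each contain exactly one edge of $J$ and such that every edge of $G$ lies in at most two of these cuts. Selecting a subfamily of pairwise edge-disjoint cuts is equivalent to finding an independent set in the \emph{conflict graph} $H$ on vertex set $\{U_1,\ldots,U_{2|J|}\}$ whose edges are the pairs $\{U_i,U_j\}$ with $\delta(U_i)\cap\delta(U_j)\ne\emptyset$. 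The goal is an independent set of size at least $|J|/2$, i.e.\ covering at least one quarter of the vertices of $H$.

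The key step is to show that $H$ is planar, so that Theorem~\ref{thm:4color} applies and a $4$-coloring of $H$ yields a color class of size at least $\lceil 2|J|/4\rceil = \lceil |J|/2\rceil$, which gives the claimed family of pairwise disjoint $J$-cuts. To build a planar embedding of $H$, I fix a planar embedding of $G$ and, by replacing each $U_i$ with an appropriate connected component if necessary (this preserves laminarity, the at-most-two property, and the existence of a $J$-edge in $\delta(U_i)$), I may assume every $\delta(U_i)$ is a simple cut, so $U_i$ occupies a connected region $R_i$ of the plane. Laminarity of $\{U_i\}$ then translates into nesting or disjointness of the regions $R_i$. I place a representative point $p_i$ in the part of $R_i$ that is not covered by any descendant $R_j\subsetneq R_i$ and draw each conflict edge $\{U_i,U_j\}$ along a single $G$-edge that lies in both $\delta(U_i)$ and $\delta(U_j)$; since every $G$-edge is in at most two cuts, each $G$-edge carries at most one such conflict edge. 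A case analysis on the relation of $U_i$ and $U_j$ in the laminar family ($U_i\subsetneq U_j$, $U_j\subsetneq U_i$, or $U_i\cap U_j=\emptyset$) then shows that the routings can be chosen to avoid crossings.

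The main obstacle is the planarity proof for $H$: while it is intuitive that the $2|J|$ cuts form a nested system of closed curves in the sphere and that adjacencies in $H$ arise from shared boundary segments — a quintessentially planar picture — turning this into a rigorous, algorithmically tractable argument requires care in choosing the representative points and routings, especially when two cuts share several edges or when a deeply nested set conflicts with a distant ancestor. Once $H$ is shown to be planar, both Theorem~\ref{thm:2packingofTcuts} and Theorem~\ref{thm:4color} run in polynomial time, yielding the claimed polynomial-time algorithm producing at least $|J|/2$ pairwise disjoint $J$-cuts.
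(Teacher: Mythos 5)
Your high-level strategy differs from the paper's: you want to build the global \emph{conflict graph} on all $2|J|$ sets, prove it is planar, and apply the four color theorem once; the paper never proves (or uses) planarity of this conflict graph. Instead it first massages the family into a laminar family of \emph{simple} cuts and then runs a recursive peeling algorithm (Lemma~\ref{lemma:cutpackingfromhalfintegral}) that repeatedly 4-colors small planar \emph{minors} of $G$ (the contracted children of a set $\overline U$ together with a special vertex for $V\setminus\overline U$), combining color classes across recursion levels with a tie-breaking rule. Your route, if completed, would be cleaner, but it has a genuine gap exactly at its crux: the planarity of the conflict graph is asserted, not proven, and your own sketch concedes that the routing argument is incomplete. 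The intuitive picture of nested closed curves does not by itself rule out a $K_5$ or $K_{3,3}$; what actually saves the day is a combinatorial localization that you never state: if $e\in\delta(U)\cap\delta(W)$, then $e\in\delta(X)$ for \emph{every} $X$ of the family nested between $U$ and $W$ (when $U\subsetneq W$) or lying between the two endpoints of $e$ in the laminar order (when $U\cap W=\emptyset$), so the at-most-two property forces every conflict to be between a set and its parent or between two children of a common parent in the laminar forest. This decomposes the conflict graph into a tree of pieces glued at single vertices, each piece being a minor of $G$ (contract the children and contract $V\setminus\overline U$), hence planar; without an argument of this kind your planarity claim is unsubstantiated, and with it you have essentially rederived the structure the paper exploits recursively.

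A second, smaller gap is your one-line reduction to simple cuts. Replacing $U_i$ by the connected component of $G[U_i]$ whose boundary carries the $J$-edge does preserve laminarity, but a simple cut also requires $G[V\setminus U_i]$ to be connected, and performing that second replacement (taking the right component of the complement) yields in general only a \emph{cross-free} family, not a laminar one; the paper needs an extra step, complementing all sets containing a fixed vertex, to restore laminarity. Your planarity argument relies on "nesting or disjointness of the regions," so you cannot skip this repair. Finally, note that duplicate sets $U_i=U_j$ (which Theorem~\ref{thm:2packingofTcuts} permits) are adjacent in your conflict graph, so an independent set never selects both --- that part of your argument is fine --- but you should say so, since the paper devotes a separate case to duplicates.
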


\subsection{Overall Algorithm and Proof of Theorem \ref{thm:mainResultOne}}

We have now all ingredients to prove our main theorem.
The overall algorithm can be summarized as follows.
The input consists of two graphs $G=(V,E)$ and $H=(V,D)$ such that $G+H$ is planar.

\begin{enumerate}
\item[\emph{Step 0:}] Construct a planar dual graph $(G+H)^*$
\item[\emph{Step 1:}] Solve the nonnegative cycle LP \eqref{LP:NNC}
in the planar dual graph $(G+H)^*$, to get a fractional solution $x$; 
\item[\emph{Step 2:}] Use the algorithm of Theorem \ref{thm:s1ands2} (proved in Section~\ref{section:rounding}) 
to deduce an integral feasible solution $x'$ to \NNCs with $\sum_{e\in D}x'_e\ge \frac{1}{16}\sum_{e\in D}x_e$. 
Let $J:=\{e\in D:x'_e=1\}$.
\item[\emph{Step 3:}] Contract the edges in $D\setminus J$ (in the planar dual graph) 
and use the algorithm of Theorem \ref{thm:quarter} (proved in Section~\ref{section:jpacking}) to compute at least
$\frac{|J|}{2}$ pairwise disjoint $J$-cuts. 
By planar duality, these correspond to at least $\frac{|J|}{2}$ pairwise disjoint $D$-cycles in $G+H$.
Output these.
\end{enumerate}

We now prove our main result.

\begin{proof} (of Theorem~\ref{thm:mainResultOne}) \ 
A $J$-cut in $(V,J\cupp E)^*$ indeed corresponds to a $J$-cycle in $(V,J\cupp E)$ and hence in $G+H$. 
Therefore the output is a feasible solution. 
By Lemma \ref{lemma:solvelp}, Theorem~\ref{thm:s1ands2}, and Theorem~\ref{thm:quarter}, the algorithm runs in polynomial time. 
The number of $D$-cycles that it computes is 
at least $\frac{1}{32}$ times the LP value.

On the other hand, consider an optimum solution,
say consisting of $\text{OPT}$ many $D$-cycles.
Then there is a set $\bar D\subseteq D$ such that 
$|\bar D|=\text{OPT}$ and $(G,(V,\bar D))$ has a complete solution. This instance must therefore satisfy the cut condition, 
in other words we have
$|C\cap \bar D| \le |C\cap E|$ for every cut $C$ in $G+H$.
Therefore, in the planar dual, setting
$\bar x_e:=1$ for $e\in\bar D$ and $\bar x_e:=0$ for $e\in D\setminus\bar D$ defines a feasible solution to the nonnegative cycle LP. Hence the LP value is at least $\text{OPT}$.

We conclude that the algorithm is a polynomial-time ${32}$-approximation algorithm for $\EDPs$ if $G+H$ is planar. 
\end{proof}

\section{Rounding the Nonnegative Cycle LP (Proof of Theorem~\ref{thm:s1ands2})}\label{section:rounding}

In this section we prove Theorem \ref{thm:s1ands2}, whose statement we now recall. 

\theoremstepone*

\subsection{Algorithm} 

Starting from a feasible  solution $x$ of the nonnegative cycle LP (\ref{LP:NNC}), 
we first contract all cycles in $(V,D)$. Every edge $e$ that vanishes in this contraction has $x_e=0$, therefore this preprocessing will not change $\sum_{e\in D}x_e$.
Now $(V,D)$ is a forest.
Then we execute the following two algorithms and pick the better of the two solutions $S_1$ and $S_2$ obtained from Algorithms~\ref{algo:S1} and~\ref{algo:S2} respectively. 

The first algorithm is independent of the fractional solution $x$ and simply picks a subset of edges incident to leaves. It is the better choice if $(V,D)$ has many leaves (degree-1 vertices).

\medskip
\begin{algorithm}[H]\label{algo:S1}
\SetAlgoVlined

 \KwIn{an $\NNCs$ input $G+H$ such that $(V,D)$ is a forest, and a feasible solution $x$ to the nonnegative cycle LP $\eqref{LP:NNC}$.}
 \KwOut{
 a feasible solution $S_1$ to \NNC.}
Find a 4-coloring of $G$, consider a color class containing the largest number of leaves of $(V,D)$, and let $I$ be the set of leaves of that color.

Let $S_1$ be the set of edges in $D$ that are incident to a leaf in $I$.

    \Return $S_1$
\caption{The leaf algorithm}
\end{algorithm}

\medskip
The second algorithm does a delicate rounding of the fractional LP solution; see Figure \ref{fig:roundnonnegativecyclelpontree} for an example.

\medskip
\begin{algorithm}[H]\label{algo:S2}
\SetAlgoVlined
 \KwIn{an $\NNCs$ input $G+H$ such that $(V,D)$ is a forest, and a feasible solution $x$ to the nonnegative cycle LP $\eqref{LP:NNC}$.}
 \KwOut{a feasible solution $S_2$ to \NNC.}
For each connected component of the forest $(V,D)$:
Root the tree arbitrarily. 
Let $B$ denote the set of vertices with at least two children, and let $L$ denote the set of leaves.

For each $b'\in B\cup L$ and $b\in B\setminus\{b'\}$ such that $b$ is an ancestor of $b'$
and each inner vertex of the $b$-$b'$-path has exactly one child: 
subdivide the first edge on this path, thus inserting an artificial edge $a_{b,b'}$ incident to $b$, and set  $x_{a_{b,b'}}=0$. Let $A$ denote the set of artificial edges.

Define a budget function $y:D\cup A \to\mathbb{R}_{\ge 0}$.
Initially $y_e=x_e$ for all $e\in D\cup A$.

For each artifical edge $a_{b,b'}\in A$
follow the path from $b$ down to $b'$. 
For each traversed edge $e\in D$,
decrease $y_e$ and increase $y_{a_{b,b'}}$ by the same amount.
Do this until $y_{a_{b,b'}}=1$ or we reach $b'$. 

Consider the edges $e=(u,v)\in D$ in an order of non-increasing distance from the root, and for each such edge $e$, if $y_e>0$ then: 
Follow the path from $u$ up to the root. 
For each traversed edge $e'\in D\cup A$, decrease $y_{e'}$ and increase $y_e$ by the same amount.
Do this until $y_e=2$ or we reach the root. 

$S_2^0\gets \{ e\in D: y_e>0\} $.

Contract each tree of $(V,D)$, find a 4-coloring of the resulting graph, inducing a 4-coloring of the trees, 
choose the color class maximizing the number of edges of $S_2^0$ in trees of that color, and let $S_2$ denote that subset of $S_2^0$. 

    \Return $S_2$
\caption{The internal algorithm}
\end{algorithm}
\medskip

\begin{figure}
  \begin{center}
  \begin{tikzpicture}[scale=0.63]
  \tikzset{inner/.style={
  draw=black,thick,circle,inner sep=0em,minimum size=4pt,fill=black
  }}
  \tikzset{branch/.style={
  draw=red,thick,rectangle,inner sep=0em,minimum size=4pt,fill=red
  }}
  \tikzset{artificial/.style={
  very thick, densely dotted, red
  }}
  \tikzset{arc/.style={
  very thick
  }}
  \tikzset{take/.style={
  line width = .75mm, blue
  }}
    \tikzset{shade1/.style={
  line width=2mm, yellow
  }}
      \tikzset{shade2/.style={
  line width=2mm, green
  }}

\begin{scope}[xshift=-12cm]  
  \node at (6.8,20.1) {\scriptsize (a)};
  \node at (9.4,20.1) {\scriptsize root};
  \node[inner] (v0) at (10,20) {};
  \node[branch] (v1) at (10,19) {};
  \draw[arc] (v0) to node[right] {\scriptsize .8} (v1);

 \node[inner] (vk1) at (7,18) {};
  \draw[arc] (v1) to node[above] {\scriptsize .9 \ } (vk1);
  \node[inner] (vk2) at (7,17) {};
  \draw[arc] (vk1) to node[left] {\scriptsize .7} (vk2);
 \node[inner] (vk3) at (7,16) {};
  \draw[arc] (vk2) to node[left] {\scriptsize .9} (vk3);
  
  \node[inner] (vl1) at (9,18) {};
  \draw[arc] (v1) to node[left] {\scriptsize .3} (vl1);
  \node[inner] (vl2) at (9,17) {};
  \draw[arc] (vl1) to node[left] {\scriptsize .9} (vl2);
  \node[branch] (vl3) at (9,16) {};
  \draw[arc] (vl2) to node[left] {\scriptsize .4} (vl3);
 
  \node[inner] (vlr1) at (10,15) {};
  \draw[arc] (vl3) to node[right] {\scriptsize .8} (vlr1);
  
  \node[branch] (vll1) at (8,15) {};
  \draw[arc] (vl3) to node[left] {\scriptsize .3} (vll1);
  
  \node[inner] (vlll1) at (7,14) {};
  \draw[arc] (vll1) to node[left] {\scriptsize .5} (vlll1);
  
  \node[inner] (vllr1) at (9,14) {};
  \draw[arc] (vll1) to node[right] {\scriptsize .8} (vllr1);
  \node[inner] (vllr2) at (9,13) {};
  \draw[arc] (vllr1) to node[right] {\scriptsize .4} (vllr2);
     
  \node[inner] (vr1) at (11,18) {};
  \draw[arc] (v1) to node[right] {\scriptsize .6} (vr1);
  \node[inner] (vr2) at (11,17) {};
  \draw[arc] (vr1) to node[right] {\scriptsize .8} (vr2);
  \node[inner] (vr3) at (11,16) {};
  \draw[arc] (vr2) to node[right] {\scriptsize .9} (vr3);
  \node[inner] (vr4) at (11,15) {};
  \draw[arc] (vr3) to node[right] {\scriptsize .8} (vr4);
  \node[inner] (vr5) at (11,14) {};
  \draw[arc] (vr4) to node[right] {\scriptsize .1} (vr5);
\end{scope}

\begin{scope}[xshift=-6cm]  
  \node at (6.8,20.1) {\scriptsize (b)};
  \node at (9.4,20.1) {\scriptsize root};
  \node[inner] (v0) at (10,20) {};
  \node[branch] (v1) at (10,19) {};
  \draw[arc] (v0) to node[right] {\scriptsize .8} (v1);

  \node[inner] (vk0) at (7,19) {};
  \draw[artificial] (v1) to[bend right] node[below] {\scriptsize 0 \ \ } (vk0);
 \node[inner] (vk1) at (7,18) {};
  \draw[arc] (vk0) to node[left] {\scriptsize .9} (vk1);
  \node[inner] (vk2) at (7,17) {};
  \draw[arc] (vk1) to node[left] {\scriptsize .7} (vk2);
 \node[inner] (vk3) at (7,16) {};
  \draw[arc] (vk2) to node[left] {\scriptsize .9} (vk3);
  
  \node[inner] (vl0) at (9,19) {};
  \draw[artificial] (v1) to node[below] {\scriptsize 0} (vl0);
  \node[inner] (vl1) at (9,18) {};
  \draw[arc] (vl0) to node[left] {\scriptsize .3} (vl1);
  \node[inner] (vl2) at (9,17) {};
  \draw[arc] (vl1) to node[left] {\scriptsize .9} (vl2);
  \node[branch] (vl3) at (9,16) {};
  \draw[arc] (vl2) to node[left] {\scriptsize .4} (vl3);
 
   \node[inner] (vlr0) at (10,16) {};
  \draw[artificial] (vl3) to node[below] {\scriptsize 0} (vlr0);
  \node[inner] (vlr1) at (10,15) {};
  \draw[arc] (vlr0) to node[right] {\scriptsize .8} (vlr1);
  
  \node[inner] (vll0) at (8,16) {};
  \draw[artificial] (vl3) to node[below] {\scriptsize 0} (vll0);
  \node[branch] (vll1) at (8,15) {};
  \draw[arc] (vll0) to node[left] {\scriptsize .3} (vll1);
  
  \node[inner] (vlll0) at (7,15) {};
  \draw[artificial] (vll1) to node[below] {\scriptsize 0} (vlll0);
  \node[inner] (vlll1) at (7,14) {};
  \draw[arc] (vlll0) to node[left] {\scriptsize .5} (vlll1);
  
  \node[inner] (vllr0) at (9,15) {};
  \draw[artificial] (vll1) to node[below] {\scriptsize 0} (vllr0);
  \node[inner] (vllr1) at (9,14) {};
  \draw[arc] (vllr0) to node[right] {\scriptsize .8} (vllr1);
  \node[inner] (vllr2) at (9,13) {};
  \draw[arc] (vllr1) to node[right] {\scriptsize .4} (vllr2);
     
  \node[inner] (vr0) at (11,19) {};
  \draw[artificial] (v1) to node[below] {\scriptsize 0} (vr0); 
  \node[inner] (vr1) at (11,18) {};
  \draw[arc] (vr0) to node[right] {\scriptsize .6} (vr1);
  \node[inner] (vr2) at (11,17) {};
  \draw[arc] (vr1) to node[right] {\scriptsize .8} (vr2);
  \node[inner] (vr3) at (11,16) {};
  \draw[arc] (vr2) to node[right] {\scriptsize .9} (vr3);
  \node[inner] (vr4) at (11,15) {};
  \draw[arc] (vr3) to node[right] {\scriptsize .8} (vr4);
  \node[inner] (vr5) at (11,14) {};
  \draw[arc] (vr4) to node[right] {\scriptsize .1} (vr5);
\end{scope}

\begin{scope}[xshift=0cm]  
  \node at (6.8,20.1) {\scriptsize (c)};
  \node at (9.4,20.1) {\scriptsize root};
  \node[inner] (v0) at (10,20) {};
  \node[branch] (v1) at (10,19) {};
  \draw[arc] (v0) to node[right] {\scriptsize .8} (v1);
  
  \node[inner] (vk0) at (7,19) {};
  \draw[artificial] (v1) to[bend right] node[below] {\scriptsize 1 \ \ } (vk0);
 \node[inner] (vk1) at (7,18) {};
  \draw[arc] (vk0) to node[left] {\scriptsize 0} (vk1);
  \node[inner] (vk2) at (7,17) {};
  \draw[arc] (vk1) to node[left] {\scriptsize .6} (vk2);
 \node[inner] (vk3) at (7,16) {};
  \draw[arc] (vk2) to node[left] {\scriptsize .9} (vk3);
    
  \node[inner] (vl0) at (9,19) {};
  \draw[artificial] (v1) to node[below] {\scriptsize 1} (vl0);
  \node[inner] (vl1) at (9,18) {};
  \draw[arc] (vl0) to node[left] {\scriptsize 0} (vl1);
  \node[inner] (vl2) at (9,17) {};
  \draw[arc] (vl1) to node[left] {\scriptsize .2} (vl2);
  \node[branch] (vl3) at (9,16) {};
  \draw[arc] (vl2) to node[left] {\scriptsize .4} (vl3);
 
   \node[inner] (vlr0) at (10,16) {};
  \draw[artificial] (vl3) to node[below] {\scriptsize .8} (vlr0);
  \node[inner] (vlr1) at (10,15) {};
  \draw[arc] (vlr0) to node[right] {\scriptsize 0} (vlr1);
  
  \node[inner] (vll0) at (8,16) {};
  \draw[artificial] (vl3) to node[below] {\scriptsize .3} (vll0);
  \node[branch] (vll1) at (8,15) {};
  \draw[arc] (vll0) to node[left] {\scriptsize 0} (vll1);
  
  \node[inner] (vlll0) at (7,15) {};
  \draw[artificial] (vll1) to node[below] {\scriptsize .5} (vlll0);
  \node[inner] (vlll1) at (7,14) {};
  \draw[arc] (vlll0) to node[left] {\scriptsize 0} (vlll1);
  
  \node[inner] (vllr0) at (9,15) {};
  \draw[artificial] (vll1) to node[below] {\scriptsize 1} (vllr0);
  \node[inner] (vllr1) at (9,14) {};
  \draw[arc] (vllr0) to node[right] {\scriptsize 0} (vllr1);
  \node[inner] (vllr2) at (9,13) {};
  \draw[arc] (vllr1) to node[right] {\scriptsize .2} (vllr2);
     
  \node[inner] (vr0) at (11,19) {};
  \draw[artificial] (v1) to node[below] {\scriptsize 1} (vr0); 
  \node[inner] (vr1) at (11,18) {};
  \draw[arc] (vr0) to node[right] {\,\scriptsize 0} (vr1);
  \node[inner] (vr2) at (11,17) {};
  \draw[arc] (vr1) to node[right] {\scriptsize .4} (vr2);
  \node[inner] (vr3) at (11,16) {};
  \draw[arc] (vr2) to node[right] {\scriptsize .9} (vr3);
  \node[inner] (vr4) at (11,15) {};
  \draw[arc] (vr3) to node[right] {\scriptsize .8} (vr4);
  \node[inner] (vr5) at (11,14) {};
  \draw[arc] (vr4) to node[right] {\scriptsize .1} (vr5);
\end{scope}

\begin{scope}[xshift=6cm]  

  \draw[shade1] (8.5,19.48) to [out=180,in=30] (7,19);
  \draw[shade1] (7,19) to (7,16);
  \draw[shade2] (11,14) to (11,17.5);
  \draw[shade1] (11,17.5) to (11,19);
  \draw[shade1] (11,19) to (10.1,19);
  \draw[shade1] (9,13) to (9,15);
  \draw[shade1] (9,15) to (8,15);
  \draw[shade1] (8,15) to (8,16);
  \draw[shade1] (8,16) to (9,16);
  \draw[shade1] (9,16) to (9,17.5);
  \draw[shade2] (9,17.5) to (9,19);
  \draw[shade2] (9,19) to (10,19);
  \draw[shade2] (10,19) to (10,20);

  \node at (6.8,20.1) {\scriptsize (d)};
  \node at (9.4,20.1) {\scriptsize root};
  \node[inner] (v0) at (10,20) {};
  \node[branch] (v1) at (10,19) {};
  \draw[arc] (v0) to node[right] {\scriptsize 0} (v1);
  
  \node[inner] (vk0) at (7,19) {};
  \draw[artificial] (v1) to[bend right] node[below] {\scriptsize .5 \ \ } (vk0);
 \node[inner] (vk1) at (7,18) {};
  \draw[arc] (vk0) to node[left] {\scriptsize 0} (vk1);
  \node[inner] (vk2) at (7,17) {};
  \draw[arc] (vk1) to node[left] {\scriptsize 0} (vk2);
 \node[inner] (vk3) at (7,16) {};
  \draw[take] (vk2) to node[right] {\scriptsize $e_3$} node[left] {\scriptsize 2} (vk3);  
  
  \node[inner] (vl0) at (9,19) {};
  \draw[artificial] (v1) to node[below] {\scriptsize 0} (vl0);
  \node[inner] (vl1) at (9,18) {};
  \draw[arc] (vl0) to node[left] {\scriptsize 0} (vl1);
  \node[inner] (vl2) at (9,17) {};
  \draw[take] (vl1) to node[right] {\scriptsize $e_4$} node[left] {\scriptsize 1.9} (vl2);
  \node[branch] (vl3) at (9,16) {};
  \draw[arc] (vl2) to node[left] {\scriptsize 0} (vl3);
 
   \node[inner] (vlr0) at (10,16) {};
  \draw[artificial] (vl3) to node[below] {\scriptsize .8} (vlr0);
  \node[inner] (vlr1) at (10,15) {};
  \draw[arc] (vlr0) to node[right] {\scriptsize 0} (vlr1);
  
  \node[inner] (vll0) at (8,16) {};
  \draw[artificial] (vl3) to node[below] {\scriptsize 0} (vll0);
  \node[branch] (vll1) at (8,15) {};
  \draw[arc] (vll0) to node[left] {\scriptsize 0} (vll1);
  
  \node[inner] (vlll0) at (7,15) {};
  \draw[artificial] (vll1) to node[below] {\scriptsize .5} (vlll0);
  \node[inner] (vlll1) at (7,14) {};
  \draw[arc] (vlll0) to node[left] {\scriptsize 0} (vlll1);
  
  \node[inner] (vllr0) at (9,15) {};
  \draw[artificial] (vll1) to node[below] {\scriptsize 0} (vllr0);
  \node[inner] (vllr1) at (9,14) {};
  \draw[arc] (vllr0) to node[right] {\scriptsize 0} (vllr1);
  \node[inner] (vllr2) at (9,13) {};
  \draw[take] (vllr1) to node[left] {\scriptsize $e_1$} node[right] {\scriptsize 2} (vllr2);
     
  \node[inner] (vr0) at (11,19) {};
  \draw[artificial] (v1) to node[below] {\scriptsize 0} (vr0); 
  \node[inner] (vr1) at (11,18) {};
  \draw[arc] (vr0) to node[right] {\,\scriptsize 0} (vr1);
  \node[inner] (vr2) at (11,17) {};
  \draw[take] (vr1) to node[left] {\scriptsize $e_5$} node[right] {\scriptsize 1.2} (vr2);
  \node[inner] (vr3) at (11,16) {};
  \draw[arc] (vr2) to node[right] {\scriptsize 0} (vr3);
  \node[inner] (vr4) at (11,15) {};
  \draw[arc] (vr3) to node[right] {\scriptsize 0} (vr4);
  \node[inner] (vr5) at (11,14) {};
  \draw[take] (vr4) to node[left] {\scriptsize $e_2$} node[right] {\scriptsize 2} (vr5);
\end{scope}

  \end{tikzpicture}
  \end{center}    
  \caption{Illustrating Algorithm \ref{algo:S2}. 
  (a): the original tree with an arbitrarily chosen root; vertices of $B$ are red squares; next to each edge $e$ the value $x_e$ is shown. 
  (b): the rooted tree after inserting the artificial edges (red, dotted) and the initial budgets $y_e$ (after step 3). 
  (c): each artificial edge has collected up to one unit from below (in step 4).
  (d): edges $e_1,\ldots,e_5$ (shown in blue, bold) are collecting up to two units from above (in step 5) and are included into $S_2^0$ (in step 6). 
  We see the final distribution if the edges are considered in this order. 
  The green and yellow shades show from where the final budgets $y_{e_1},\ldots,y_{e_5}$ come from.
  \label{fig:roundnonnegativecyclelpontree}
  }
\end{figure}
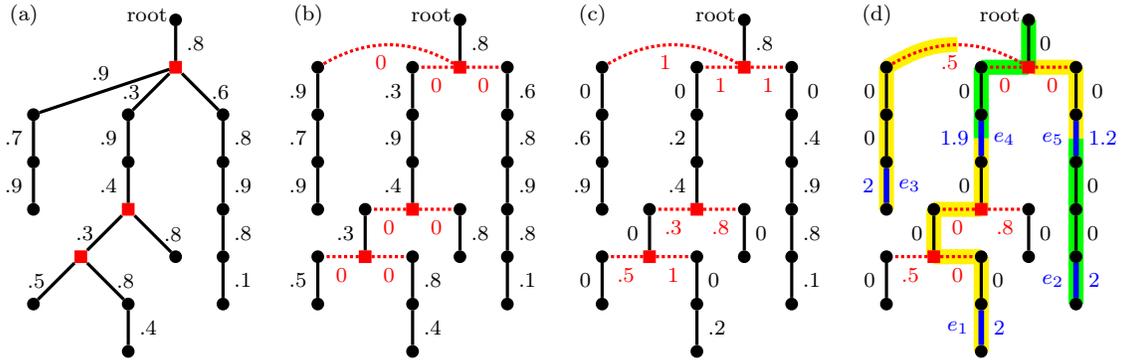

\subsection{Analysis}

These algorithms are well-defined and polynomial time by Theorem \ref{thm:4color}. 
We will prove that $S_1$ and $S_2$ are feasible $\NNC$ solutions (Lemmas~\ref{lemma:S1feasible} and~\ref{lemma:S2feasible}, respectively) 
and that  $\max \{ |S_1|, |S_2|\} \geq \frac{1}{16} \sum_{e\in D} x_e$ (Lemma~\ref{lemma:costtheorem1}), thus establishing Theorem \ref{thm:s1ands2}.

\begin{lemma}
$S_1$ is a feasible solution for \NNC. 
\label{lemma:S1feasible}
\end{lemma}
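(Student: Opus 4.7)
The plan is to show that $|C \cap S_1| \leq |C \cap E|$ for every cycle $C$ in $G+H$, by constructing an injection from $S_1 \cap C$ into $E \cap C$. The two properties of $I$ that drive this are: (i) every $v \in I$ is a leaf of the forest $(V,D)$, so the unique $D$-edge at $v$ is its only incident demand edge; and (ii) $I$ is contained in one color class of a proper 4-coloring of $G$ (which exists by Theorem~\ref{thm:4color}, as $G$ is planar), hence $I$ is independent in the supply graph $G$.

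First I would fix a cycle $C$ and, for each $e \in S_1 \cap C$, pick a leaf $u_e \in I$ incident to $e$; such a leaf exists by the definition of $S_1$, and if both endpoints of $e$ happen to lie in $I$, any choice works. Since $u_e$ belongs to $V(C)$, the cycle uses exactly two edges at $u_e$. One of them is $e$; call the other one $f_e$. Because $u_e$ is a leaf of $(V,D)$, its only incident demand edge is $e$ itself, so $f_e$ must lie in $E$.

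Next I would verify injectivity of $e \mapsto f_e$ on $S_1 \cap C$. If $f_{e_1} = f_{e_2}$ for distinct $e_1, e_2 \in S_1 \cap C$, then $u_{e_1} \neq u_{e_2}$: otherwise both $e_1$ and $e_2$ would be incident to the same leaf of $(V,D)$, contradicting that a leaf has degree one. Thus the supply edge $f_{e_1} = f_{e_2}$ joins two distinct vertices $u_{e_1}, u_{e_2} \in I$, contradicting the independence of $I$ in $G$. Injectivity then gives $|C \cap E| \geq |\{f_e : e \in S_1 \cap C\}| = |S_1 \cap C|$, completing the proof.

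There is no real obstacle here beyond making sure the argument is robust to mild pathologies: demand edges with both endpoints placed in $I$ (handled by choosing either leaf for $u_e$), loops in $E$ or cycles $C$ disjoint from $S_1$ (trivial), and the case where preprocessing has already collapsed parallel $D$-edges or $D$-cycles before $S_1$ is formed (after which $(V,D)$ is a forest and the notion of leaf is well defined). Throughout, the critical use of planarity is concentrated in invoking the four color theorem to guarantee the independent set $I$.
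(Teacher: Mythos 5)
Your proof is correct and follows essentially the same route as the paper: the paper chains the two inequalities $|C\cap S_1|\le|V(C)\cap I|\le|C\cap E|$ using exactly the two facts you isolate (leaves of $(V,D)$ have a unique incident demand edge, and $I$ is independent in $G$), while you compose them into a single explicit injection $e\mapsto f_e$. No gap; the edge cases you mention are handled the same way implicitly in the paper.
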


\begin{proof}
Let $C$ be a cycle in $G+H$. By definition of $S_1$ each edge of $C\cap S_1$ is incident to at least one vertex of $I$, 
and since vertices of $I$ are leaves of $(V,D)$ those vertices are all distinct, so  $|C\cap S_1|\leq |V(C)\cap I|$. 
Each vertex of $V(C)\cap I$ is incident to at most one edge of $C\cap D$, hence to at least one edge of $C\cap E$. 
Since $I$ is an independent set in $G$, those edges are all distinct, so $|V(C)\cap I|\leq |C\cap E|$.
\end{proof}

To prove that $S_2$ is also a feasible solution for \NNC, we first show the following key inequality:

\begin{lemma}
\label{lemma:keyinequality}
Let $P$ be the set of edges of a path in $(V,D)$. Then
\begin{equation}
\label{eq:keyinequality}
    \sum_{e\in P} x_e\ge 2 |S_2\cap P|-2.
\end{equation}
\end{lemma}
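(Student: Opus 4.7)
The plan is to track how Algorithm~\ref{algo:S2} redistributes the $x$-budget along $P$ through steps~4 and~5. Since $S_2\subseteq S_2^0$, it suffices to prove $\sum_{e\in P}x_e\ge 2|S_2^0\cap P|-2$. I work inside the forest produced by step~3: $P$ corresponds to a path $\tilde P$ whose $D$-part still carries the original $x$-values (the added artificial edges have $x=0$). For each $f\in S_2^0$, denote by $\pi_P(f)$ the amount of $P$-sourced $x$-budget that ends up at $f$ at the end of the algorithm (allowing units that first moved into an artificial edge of $\tilde P$ during step~4 and were then pulled by $f$ in step~5). Because each unit of $x_e$ has a unique final destination, $\sum_{f\in \tilde P\cap S_2^0}\pi_P(f)\le\sum_{e\in P}x_e$, so it is enough to prove $\sum_{f\in \tilde P\cap S_2^0}\pi_P(f)\ge 2|S_2^0\cap P|-2$.

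I first treat the monotone case, where $\tilde P$ runs from a lower endpoint $u_P$ up to an ancestor $v_P$. Let $f^{(1)},\dots,f^{(k)}$ be the edges of $S_2^0\cap\tilde P\cap D$ in step-5 processing order (deepest first). The key observation is that for $j<k$, the step-5 pull of $f^{(j)}$ stays entirely inside $\tilde P$: otherwise it would have first traversed, and fully emptied, every $\tilde P$-edge above $f^{(j)}$, including $f^{(j+1)}$, contradicting $f^{(j+1)}\in S_2^0$. Since the pull did not reach the root, it must have stopped upon hitting $y_{f^{(j)}}=2$, and all of that budget is drawn from $\tilde P$-edges (directly from $D$-edges on $\tilde P$, or via artificial edges of $\tilde P$ filled from $\tilde P$ during step~4). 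Hence $\pi_P(f^{(j)})=2$ for $j<k$ and $\pi_P(f^{(k)})\ge 0$, which sums to $\sum_j\pi_P(f^{(j)})\ge 2(k-1)=2|S_2^0\cap P|-2$, as desired.

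The general V-shaped case, where $P$'s highest vertex $c$ is an internal (and hence branching) vertex of $(V,D)$, is the main obstacle. Splitting $P$ at $c$ into monotone halves $P_1,P_2$ and applying the monotone argument to each side naively only yields $\sum_{e\in P}x_e\ge 2|S_2^0\cap P|-4$; one has to recover an additional~$2$. My plan is to exploit the two artificial edges $a_{c,b_1},a_{c,b_2}$ placed at $c$ in step~3, which step~4 fills with up to one unit of $P$-sourced budget each: when the shallowest $f\in S_2^0\cap P_i$ has its step-5 pull extend past the $D$-edges of its side, it must first pass through $a_{c,b_i}$ and absorb the unit stored there, contributing an extra $\ge 1$ to $\pi_P(f)$, and summing these contributions over both halves restores the missing~$2$. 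The delicate point, which I expect to be the real obstacle, is that off-$P$ edges deeper than the side-shallowest $f$'s may also have $a_{c,b_i}$ on their step-5 upward path and can capture this unit first; a careful case analysis must then show that whenever such preemption occurs, enough additional $P_i$-budget remains undepleted to still yield the bound.
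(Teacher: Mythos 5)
Your reduction to $S_2^0$, your accounting via $\pi_P$, and your treatment of the monotone case all match the paper's argument in substance: the paper likewise splits $P$ at its apex into two downward paths $P_1,P_2$ and shows that every element of $S_2\cap P_i$ except the one closest to the apex, $e_i^h$, ends step~5 with $y=2$ drawn entirely from $P_i$, giving $2(|S_2\cap P_i|-1)$ per monotone piece. (One assertion you make without proof: that the artificial edges of $\tilde{P}$ between $f^{(j)}$ and $f^{(j+1)}$ were filled in step~4 only from $P$-edges. This is true, but it requires observing that such an edge $a_{b,b'}$ drains its segment top-down and must have stopped at or above $f^{(j)}$ --- otherwise $f^{(j)}$ would already be empty after step~4 and could not lie in $S_2^0$ --- and that everything between $b$ and $f^{(j)}$ lies on $P$.)

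The genuine gap is the V-shaped case, exactly where you flag it. Recovering the missing $2$ from the artificial edges $a_{c,b_1},a_{c,b_2}$ incident to the apex $c$ fails for two reasons. First, $a_{c,b_i}$ only collects \emph{up to} one unit in step~4: if the $x$-mass on the $c$-to-$b_i$ segment is small (e.g.\ because $e_i^h$ lies far below $b_i$), it holds strictly less than one unit, and the ``extra $\ge 1$'' is simply not there. Second, as you observe, an edge hanging off a branching vertex of $P_i$ strictly below $c$ has $a_{c,b_i}$ on its root-path but not $e_i^h$, so it can be processed earlier in step~5 and steal whatever is stored at $a_{c,b_i}$. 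The paper avoids both problems by a different choice: it takes $f_i$ to be the \emph{last} artificial edge on the path from the apex down to $e_i^h$, so that $e_i^h$ lies on $f_i$'s own unbranched segment. Then (a) since $y_{e_i^h}>0$ survives step~4, $f_i$ must have stopped draining before emptying $e_i^h$, which forces $y_{f_i}=1$ with all of it sourced from $P_i$; and (b) no preemption is possible, because any edge whose step-5 pull reaches $f_i$ from below must first completely drain $e_i^h$, while the edges between $f_i$ and $e_i^h$ are processed only after $e_i^h$. Hence $e_i^h$ itself absorbs at least this one unit (or reaches $y=2$), yielding $\sum_{e\in P_i}x_e\ge 2|S_2\cap P_i|-1$ per side and \eqref{eq:keyinequality} after summing. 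Without this (or an equivalent) device your preemption case analysis does not close, so the proposal as written is incomplete.
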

\begin{proof}
We may assume that $S_2\cap P\not=\emptyset$, for the assertion is trivial otherwise.
Let $v$ be the vertex in $P$ that is closest to the root (picked in step 1 of Algorithm \ref{algo:S2}) of the connected component of $(V,D)$ that contains $P$.
Then $P$ is the union of two paths $P_1$ and $P_2$ that both begin in $v$ and go down the tree. One of these paths may be empty.
Let $i\in\{1,2\}$ with $S_2\cap P_i\not=\emptyset$, and let $e_i^h$ be the edge of $S_2\cap P_i$ that is closest to $v$ (and to the root).
After step 5 of Algorithm \ref{algo:S2}, every edge $e$ of $(S_2\cap P_i)\setminus\{e_i^h\}$ has $y_e=2$, and these budgets come from
edges of $P_i$. The latter is because
\begin{itemize}
    \item[(i)] $e_i^h$ is the only edge in $S_2\cap P_i$ to which budget that is originally from closer to the root than $v$ can be moved, and
    \item[(ii)] budget that was moved towards the root (to an artificial edge) in step 4 remains at this artificial edge or is moved in step 5 to an edge that is not closer to the root than the edge where the budget was initially.
\end{itemize} Thus,
$$\sum_{e\in P_i}x_e \ge \sum_{e\in S_2\cap P_i, e\neq e_i^h} y_e= 2(|S_2\cap P_i|-1).$$
This already implies \eqref{eq:keyinequality} if $S_2\cap P=S_2\cap P_i$.

It remains to consider the case that neither $S_2\cap P_1$ nor $S_2\cap P_2$ is empty. 
Then $v\in B$, and after step 2 there is an artificial edge incident to $v$ on the path from $v$ to $e_i^h$  (for each $i\in\{1,2\}$).
Let $f_i$ be the last artificial edge on the path from $v$ to $e_i^h$. 
Since $e_i^h\in S_2$, we had $y_{e_i^h}>0$ after step 4 (i.e., even before we started augmenting $y_{e_i^h}$ in step 5). 
Therefore $y_{f_i}=1$ after step 4 (and, by the same argument as before, this budget comes from edges of $P_i$), and $y_{e_i^h}\ge 1$ after step 5. 
Thus
$$\sum_{e\in P_i} x_e\ge 1 + \sum_{e\in S_2\cap P_i, e\neq e_i^h} y_e = 1 + 2 (|S_2\cap P_i|-1) = 2 |S_2\cap P_i| - 1.$$
Adding this for $P_1$ and $P_2$ yields \eqref{eq:keyinequality}.
\end{proof}

\begin{lemma}
$S_2$ is a feasible solution for \NNC.
\label{lemma:S2feasible}
\end{lemma}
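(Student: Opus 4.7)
To prove $S_2$ is feasible, I would show $|C \cap S_2|\leq|C \cap E|$ for every cycle $C$ in $G+H$, combining the key inequality (Lemma~\ref{lemma:keyinequality}), LP-feasibility of $x$, and the 4-coloring applied in the last step of Algorithm~\ref{algo:S2}.

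First, decompose $C\cap D$ into its maximal cyclic $D$-sub-paths $P_1,\ldots,P_m$ on $C$. Because $D$-edges belong to the forest $(V,D)$, each $P_j$ is a simple path within a single tree $T_j$ of that forest. Since $S_2$ contains only edges lying in trees of the chosen color, $|C \cap S_2|=\sum_{s=1}^\ell |S_2\cap P_{i_s}|$ where $P_{i_1},\ldots,P_{i_\ell}$ are the sub-paths in chosen-color trees. Summing the key inequality over these sub-paths, together with LP-feasibility of $x$ on $C$ and nonnegativity of $x$ on the non-chosen sub-paths, yields
\[
|C \cap E|\;\geq\;\sum_{e\in C\cap D} x_e\;\geq\;\sum_{s=1}^\ell\sum_{e\in P_{i_s}} x_e\;\geq\;2|C \cap S_2|-2\ell.
\]

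It remains to show $|C \cap E|\geq 2\ell$, which, combined with the above inequality, gives $|C \cap S_2|\leq|C \cap E|$. For each $s$, the ``gap'' on $C$ between the consecutive chosen-color sub-paths $P_{i_s}$ and $P_{i_{s+1}}$ consists of some $E$-edges together with (possibly) non-chosen $D$-sub-paths. This gap induces a walk from $T_{i_s}$ to $T_{i_{s+1}}$ in the planar contracted graph $G'$ obtained by contracting each tree of $(V,D)$ to a single vertex. Since, by the 4-coloring, the chosen color class is an independent set in $G'$, any such walk between two chosen-color vertices must use at least two edges of $G'$, that is, at least two $E$-edges. Summing over the $\ell$ cyclic gaps gives $|C \cap E|\geq 2\ell$ as required.

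The main obstacle I anticipate is the degenerate case where $T_{i_s}=T_{i_{s+1}}$ and the gap consists of a single $E$-edge whose endpoints both lie in $T_{i_s}$, that is, a self-loop at $T_{i_s}$ in the contracted graph, which the 4-coloring does not rule out. Handling this requires additional slack, which can be obtained from the LP constraint applied to the short cycle formed by this $E$-edge and the unique path $Q_s$ in $T_{i_s}$ connecting its endpoints, together with a careful application of the key inequality either to $Q_s$ itself or to the combined path $P_{i_s}\cup Q_s\cup P_{i_{s+1}}$ in $T_{i_s}$ whenever it is a simple path. This extra slack should compensate for the ``missing'' $E$-edge in the bound $|C \cap E|\geq 2\ell$.
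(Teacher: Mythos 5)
Your overall strategy is the paper's: sum the key inequality (Lemma~\ref{lemma:keyinequality}) over the $D$-segments of a cycle $C$, combine with LP-feasibility of $x$, and use the independence of the chosen color class in the contracted graph to argue $|C\cap E|\ge 2\ell$. The chain of inequalities in your second display is correct, and your observation that a gap between two \emph{distinct} chosen-color trees must contain at least two supply edges is exactly the role the 4-coloring plays in the paper.

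However, the ``degenerate case'' you flag at the end is a genuine gap, not a technicality, and your sketch does not close it. The combined walk $P_{i_s}\cup Q_s\cup P_{i_{s+1}}$ need in general not be a simple path in the tree (the tree path $Q_s$ between the endpoints of the gap edge can reuse vertices and edges of $P_{i_s}$ or $P_{i_{s+1}}$), so Lemma~\ref{lemma:keyinequality} does not apply to it, and you give no quantitative accounting of how the LP constraint on the short cycle $\{f\}\cup Q_s$ compensates for the missing supply edge. The paper avoids this configuration entirely by a minimal-counterexample argument that you are missing: among all cycles with $|C\cap S_2|>|C\cap E|$, take one with the fewest $D$-segments, and show that for every tree $T$ of $(V,D)$ the set $C\cap T$ is connected. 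Indeed, if two vertices $u,v$ of $C$ in the same tree are not connected within $C\cap T$, the tree path $P\subseteq D$ between them splits $C$ into $C_1=P_1\cup P$ and $C_2=P_2\cup P$; since $P\subseteq D$ one has $|C_1\cap E|+|C_2\cap E|=|C\cap E|$ while $|C_1\cap S_2|+|C_2\cap S_2|\ge|C\cap S_2|$, so one of the two is a violating cycle with strictly fewer $D$-segments, a contradiction. This forces distinct $D$-segments to lie in distinct trees, so $T_{i_s}=T_{i_{s+1}}$ never occurs and the self-loop case disappears. You would still need to treat $|C\cap E|=1$ separately (there the bound $2|C\cap S_2|\le 3$ plus integrality gives $|C\cap S_2|\le 1$), as the paper does. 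Without the uncrossing step, your argument is incomplete.
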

\begin{proof}
Let $C$ be a cycle in $G+H$ that contains at least one demand edge and at least one supply edge 
(recall that $(V,D)$ is a forest).
The cycle $C$ alternates between $D$-segments (maximal subpaths of $C$ all whose edges belong to $D$) and $E$-segments. 
Suppose there is some cycle $C$ with $|C\cap S_2|>|C\cap E|$. 
Among those cycles $C$, choose one such that $C$ has as few $D$-segments as possible.

We claim that for every tree $T$ of $D$, $C\cap T$ is connected. 
Assume not; then there is a path $P\subseteq D$ whose endpoints $u$ and $v$ belong to $V(C)$,
but no edge of $P$ belongs to $C$.
The two endpoints $u$ and $v$ of $P$ partition $C$ into two $u$-to-$v$ paths, $P_1$ and $P_2$. 
One of the two cycles $C_1=P_1\cup P$ and $C_2=P_2\cup P$ must have $|C_i\cap S_2|>|C_i\cap E|$ 
and has strictly fewer $D$-segments than $C$, contradicting the choice of $C$. This proves the claim. 

Thus each $D$-segment of $C$ belongs to a different tree.
By Lemma \ref{lemma:keyinequality},
every $D$-segment $P$ satisfies \eqref{eq:keyinequality}.
Moreover, since $x$ is a feasible solution to the nonnegative cycle LP, $\sum_{e\in C\cap D}x_e \le |C\cap E|$.

If $|C\cap E|=1$, then \eqref{eq:keyinequality} yields 
$2|C\cap S_2| \le 2+\sum_{e\in C\cap D} x_e \le 2+ |C\cap E| = 3$, implying $|C\cap S_2|\le 1=|C\cap E|$. 
Otherwise, thanks to selecting an independent set in the contracted graph in step 7, if there is more than one $D$-segment, every $E$-segment has at least two edges.
Summing \eqref{eq:keyinequality} over all $D$-segments $P$ yields
$2|C\cap S_2| \le |C\cap E|+ \sum_{e\in C\cap D}x_e \le |C\cap E| + |C\cap E|$. Therefore, 
$ |C\cap S_2| \le |C\cap E|$ as required. 
\end{proof}

\begin{lemma}\label{lemma:costtheorem1}
$$\max \{ |S_1|, |S_2|\} \geq \frac{1}{16} \sum_{e\in D} x_e.$$ 
\end{lemma}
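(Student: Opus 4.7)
The plan is to bound $|S_1|$ and $|S_2|$ separately in terms of the LP value $X := \sum_{e\in D} x_e$ and the total number $\ell$ of leaves of the forest $(V,D)$, so that $|S_1|\ge \ell/4$ and $|S_2|\ge (X-2\ell)/8$. These combine to give $|S_1|+|S_2|\ge X/8$ (either trivially, when $X\le 2\ell$ so that $\ell/4\ge X/8$, or by direct addition), whence $\max\{|S_1|,|S_2|\}\ge X/16$ as required.

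The bound $|S_1|\ge \ell/4$ is immediate: the 4-coloring in step~1 of Algorithm~\ref{algo:S1} splits the $\ell$ leaves of $(V,D)$ into four color classes, so the chosen class $I$ satisfies $|I|\ge \ell/4$; and since $(V,D)$ is a forest and $I$ is an independent set in $G$, the $D$-edges incident to different leaves of $I$ are distinct, giving $|S_1|=|I|\ge \ell/4$.

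The bound on $|S_2|$ rests on three observations about Algorithm~\ref{algo:S2}. First, every operation in steps~4 and~5 only transfers budget between edges of $D\cup A$, so $\sum_{e\in D\cup A} y_e$ is conserved at the initial value $\sum_{e\in D\cup A} x_e = X$ (using $x_a=0$ for artificial edges). Second, at termination $y_a\le 1$ for every $a\in A$ (enforced by the stopping rule in step~4 and never violated in step~5, which only decreases $y_a$) and $y_e\le 2$ for every $e\in D$ (enforced by the stopping rule in step~5 when $e$ is processed, and preserved afterwards because $y_e$ is only ever increased during $e$'s own iteration, while any later-processed edge is strictly closer to the root by the non-increasing-depth order and therefore draws budget from a path not containing $e$); combined with $y_e=0$ for $e\in D\setminus S_2^0$, this yields $X\le 2|S_2^0|+|A|$. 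Third, each artificial edge $a_{b,b'}$ is indexed by a distinct $b'\in B\cup L$ (its closest strict branching ancestor being $b$), so $|A|\le |B|+|L|$; and the standard degree-counting identity $\sum_{v\,:\,\deg(v)\ge 3}(\deg(v)-2)=\ell_T-2$ in each tree of the forest gives $|B|\le \ell$, hence $|A|\le 2\ell$. Finally, the 4-coloring of the tree-contracted planar graph in step~7 yields $|S_2|\ge |S_2^0|/4$, so $|S_2|\ge (X-2\ell)/8$.

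The main technical obstacle is the end-of-algorithm invariant $y_e\le 2$ in step~5: one must confirm that once $e$ has been processed and its value capped at~$2$, no subsequent iteration can push $y_e$ above~$2$ again. The non-increasing-depth processing order is precisely what prevents this, since any edge considered later lies strictly above $e$ and its augmenting path to the root consists solely of its own strict ancestors. The remaining ingredients are a single degree count on each tree of $(V,D)$ and two applications of Theorem~\ref{thm:4color}, neither of which is delicate.
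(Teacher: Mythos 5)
Your proof is correct and follows essentially the same route as the paper's: conservation of the total budget $\sum_{e\in D\cup A}y_e$, the caps $y_e\le 2$ on demand edges and $y_a\le 1$ on artificial edges giving $\sum_{e\in D}x_e\le 2|S_2^0|+|A|$, the bound $|A|=O(|L|)$ (the paper states $|A|\le 2|L|-2$ directly, you derive $|A|\le|B|+|L|\le 2|L|$ via a degree count), and the two four-coloring bounds $|L|\le 4|S_1|$ and $|S_2^0|\le 4|S_2|$. The only cosmetic difference is that you package the estimate as two separate lower bounds on $|S_1|$ and $|S_2|$ before adding, whereas the paper chains the inequalities in one line; the content is identical.
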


\begin{proof}
The construction of Algorithm~\ref{algo:S2} maintains the
invariant $\sum_{e\in D} x_e=\sum_{e\in D\cup A} y_e$. 
Since edges of $S_2^0$ have $y_e\leq 2$, artificial edges have $y_e\leq 1$, and other edges have $y_e=0$, we can write $\sum_{e\in D\cup A} y_e\leq 2|S_2^0|+|A|$. 
Observe that the number of artificial edges is at most $2|L|-2$. Finally, by construction $|L|\leq 4|S_1|$ and $|S_2^0|\leq 4|S_2|$. 

Combining, we conclude
$$\sum_{e\in D} x_e\leq 2|S_2^0|+2|L|\leq 8 (|S_2|+ |S_1|) \leq 16 \max\{|S_1|,|S_2|\}.$$
\end{proof}

%
\section{Packing Cuts (Proof of Theorem~\ref{thm:quarter})}\label{section:jpacking}
%
In this section we prove Theorem~\ref{thm:quarter}, whose statement we now recall. 

\theoremquarter*

Recall that a \emph{$J$-cut} is a simple cut that contains exactly one edge from $J$.
An example by Korach and Penn \cite{KorachPenn1992} shows that the factor 2 in Theorem \ref{thm:quarter} is best possible.
Without planarity we cannot hope for any constant factor, e.g.\ if $G$ is a complete graph and $J$ a perfect matching in $G$, then
there are no two disjoint $J$-cuts.

Let $J$ be a join in a planar graph $G$.
Middendorf and Pfeiffer \cite{Midd1993} showed that it is NP-complete to decide whether there are $|J|$ pairwise disjoint $J$-cuts. 
Hence also 
finding the maximum number of pairwise disjoint $J$-cuts is NP-hard.
However, we give an approximation algorithm in the following.
Our proof is inspired by the paper of Korach and Penn  \cite{KorachPenn1992}.

Recall that a family of subsets of $V$ is \emph{laminar} if for any two of those sets either they are disjoint or one is a subset of the other.
It is \emph{cross-free} if for any two of those sets, either they are disjoint, or one is a subset of the other, or their union is $V$.
We will show the following:

\begin{lemma}
\label{lemma:cutpackingfromhalfintegral}
Let $G=(V,E)$ be a connected planar graph.
Let $\Lscr$ be a laminar family of (not necessarily distinct) vertex sets of $G$ such that 
$\delta(U)$ is a simple cut for all $U\in\Lscr$ and every edge is contained in at most two of these cuts. 
Then there exists a polynomial-time algorithm to compute a sub-family $\Lscr'\subseteq\Lscr$ such that $|\Lscr'|\ge\frac{1}{4}|\Lscr|$ 
and the cuts $\delta(U)$ for $U\in\Lscr'$ are pairwise disjoint.
\end{lemma}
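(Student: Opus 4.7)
The plan is to reduce the problem to the four color theorem (Theorem~\ref{thm:4color}). Define the \emph{conflict graph} $\Gamma$ on vertex set $\mathcal{L}$, placing an edge between distinct $U_1, U_2 \in \mathcal{L}$ whenever $\delta(U_1) \cap \delta(U_2) \neq \emptyset$. An independent set in $\Gamma$ is precisely a subfamily of $\mathcal{L}$ with pairwise edge-disjoint cuts, so a proper 4-coloring of $\Gamma$ would yield, by pigeonhole, a color class of size at least $|\mathcal{L}|/4$ that serves as $\mathcal{L}'$. Thus it suffices to show that $\Gamma$ is planar and then invoke Theorem~\ref{thm:4color}.

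Proving planarity of $\Gamma$ is the main obstacle, and my strategy uses the planar embedding of $G$ together with the laminar structure of $\mathcal{L}$. Because $\delta(U)$ is a simple cut, both $G[U]$ and $G[V \setminus U]$ are connected, and consequently $\delta(U)$ corresponds to a simple cycle in the dual $G^*$, which I realize as a Jordan curve $\gamma_U$ in the plane bounding a disk $D_U$ containing exactly the faces of $G$ interior to $U$. Laminarity of $\mathcal{L}$ makes the disks $\{D_U\}_{U \in \mathcal{L}}$ pairwise non-crossing: any two are either nested or disjoint. For each $U \in \mathcal{L}$ I place the vertex $v_U$ of $\Gamma$ inside the portion of $D_U$ not covered by any $D_{U'}$ with $U' \subsetneq U$; repeated copies of the same $U$ are placed at distinct points in that region.

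Each edge $\{U_1, U_2\}$ of $\Gamma$ has a witness edge $e \in E$ lying on both $\gamma_{U_1}$ and $\gamma_{U_2}$. Since every edge of $G$ is in at most two cuts of $\mathcal{L}$, distinct edges of $\Gamma$ have distinct witnesses. I would route each conflict edge as an arc from $v_{U_1}$ across the midpoint of its witness $e$ to $v_{U_2}$, traveling in a thin corridor just inside each intervening boundary curve. The main technical step is verifying that these arcs can be drawn simultaneously without crossings; I expect this to follow because the laminar nesting of the disks $D_U$ induces a tree-like order on the corridors, and the disjointness of witnesses prevents two arcs from colliding at any edge of $G$. Once $\Gamma$ is shown to be planar, Theorem~\ref{thm:4color} produces a 4-coloring in polynomial time, and the largest color class is the desired $\mathcal{L}'$ of size at least $|\mathcal{L}|/4$.
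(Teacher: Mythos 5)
You have correctly reduced the problem to 4-coloring the conflict graph $\Gamma$, but the entire burden of the proof then rests on the planarity of $\Gamma$, and that is precisely the step you leave unproven. The routing argument you sketch does not address the real danger: two conflict arcs with \emph{distinct} witness edges can still cross in the interior of a face region, far from any edge of $G$. For example, if $U_1,U_2,U_3,U_4$ are pairwise disjoint members of $\mathcal{L}$ lying inside a common parent and the conflicts $U_1$--$U_3$ and $U_2$--$U_4$ are both present, their arcs travel through the same annular corridor with disjoint witnesses, and neither ``disjointness of witnesses'' nor the ``tree-like order on the corridors'' rules out a crossing there; what actually rules it out is that these sibling conflicts form a subgraph of a contracted minor of $G$. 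There is also a smaller unaddressed point: the region of $D_U$ not covered by the children's disks need not be connected (a child's boundary curve may share arcs with $\gamma_U$), so even placing $v_U$ and reaching the witness crossing points requires care.

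The planarity of $\Gamma$ is in fact true, but the clean route is combinatorial rather than geometric. Since every edge lies in at most two cuts of $\mathcal{L}$, one checks that $\delta(A)\cap\delta(B)\neq\emptyset$ forces $A$ and $B$ to be siblings or a parent--child pair in the laminar forest: any other relative position places a witness edge into a third cut (e.g.\ an edge witnessing a conflict between $A$ and a strict non-parent ancestor $B$ also lies in $\delta(Y)$ for every $Y$ with $A\subseteq Y\subseteq B$). Hence $\Gamma$ decomposes into local pieces, one per node $P$ of the forest, on $\{P\}\cup\{\text{children of }P\}$; each piece is a subgraph of the planar minor of $G$ obtained by contracting each child, deleting the rest of $P$, and contracting $V\setminus P$ (here the simplicity of the cuts guarantees the contracted sets are connected), and distinct pieces share at most one vertex, so $\Gamma$ is planar. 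The paper avoids the global statement altogether: its recursive algorithm only ever 4-colors these local contracted minors, one level of the laminar family at a time, with a tie-breaking rule deciding whether the parent set survives to the next level. So your route is genuinely different and, once the planarity of $\Gamma$ is actually established, arguably more direct --- but as written the proof is incomplete at its central step.
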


 Again, the example by Korach and Penn \cite{KorachPenn1992} shows that the factor 4 in Lemma \ref{lemma:cutpackingfromhalfintegral} is best possible.
Without planarity we cannot hope for any constant factor, e.g.\ if $G$ is a complete graph and 
$\Lscr$ contains all singletons.

Before we prove Lemma \ref{lemma:cutpackingfromhalfintegral}, let us first see how it implies Theorem \ref{thm:quarter}.

\begin{proof}(Theorem \ref{thm:quarter}) \ 
Let $J$ be a join in $G$.
By Theorem \ref{thm:2packingofTcuts}, we can compute vertex sets $U_1,\ldots,U_{2|J|}$ (not necessarily distinct)
that form a laminar family $\Lscr$, such that $|\delta(U_i)\cap J|=1$ for all $i$
and every edge is contained in at most two of the cuts $\delta(U_i)$, $i=1,\ldots,2|J|$. 
Before applying Lemma \ref{lemma:cutpackingfromhalfintegral},  
we want to make sure that these
cuts are all simple cuts, and this will be achieved in the following by a sequence of transformations. 

For $i=1,\ldots,2|J|$ let $U_i'\subseteq U_i$ be the vertex set of the connected component of $G[U_i]$ 
for which $|\delta(U_i')\cap J|=1$.
Then $\delta(U_i')\subseteq \delta(U_i)$.
We claim that the sets $U_1',\ldots,U_{2|J|}'$ form a laminar family $\Lscr'$.
Let $1\le i<j\le 2|J|$.
If $U_i\cap U_j=\emptyset$, then $U_i'\cap U_j'=\emptyset$.
If $U_i\subseteq U_j$, then the vertex sets of the connected components of $G[U_i]$ are subsets 
of the vertex sets of the connected components of $G[U_j]$, so $U_i'$ is either disjoint from $U_j'$ or a subset of $U_j'$.
The case $U_j\subseteq U_i$ is symmetric.

Now $G[U_i']$ is connected for all $i$. 
For $i=1,\ldots,2|J|$ let $U_i''\subseteq V\setminus U_i'$ be the vertex set of the connected component of $G[V\setminus U_i']$ 
for which $|\delta(U_i'')\cap J|=1$.
Then $\delta(U_i'')\subseteq \delta(U_i')\subseteq\delta(U_i)$.
Note that for each $i$, $G[U_i'']$ and $G[V\setminus U_i'']$ are connected and $|\delta(U_i'')\cap J|=1$.
In other words, $\delta(U_1''),\ldots,\delta(U_{2|J|}'')$ are $J$-cuts.

We claim that the sets $U_1'',\ldots,U_{2|J|}''$ form a cross-free family $\Lscr''$.
Let $1\le i<j\le 2|J|$.
If $U_i'\subseteq U_j'$, then the vertex sets of the connected components of $G[V\setminus U_j']$ are subsets 
of the vertex sets of the connected components of $G[V\setminus U_i']$, so $U_j''$ is either disjoint from $U_i''$ or a subset of $U_i''$.
The case $U_j'\subseteq U_i'$ is symmetric.

It remains to consider the case $U_i'\cap U_j'=\emptyset$.
Suppose $X:=V\setminus (U_i''\cup U_j'')$ is nonempty.
Since $G$ is connected, we have 
$\emptyset\not= \delta(X) = \delta(U_i''\cup U_j'') \subseteq \delta(U_i'\setminus U_j'') \cup \delta(U_j'\setminus U_i'')$, where the last step is because all edges with exactly one endpoint in $U_i''$ have the other endpoint in $U_i'$, and all edges with exactly one endpoint in $U_j''$ have the other endpoint in $U_j'$.
Hence at least one of $U_i'\setminus U_j''$ or $U_j'\setminus U_i''$ is nonempty.
Suppose, without loss of generality, $U_i'\setminus U_j''\not=\emptyset$.
Because $G[U_i']$ is connected, $U_i'$ is a subset of the vertex set of a connected component of $G[V\setminus U_j']$.
So $U_i'\cap U_j''=\emptyset$.
But then $U_j''$ is a subset of the vertex set of a connected component of $G[V\setminus U_i']$, and thus it is a subset of $U_i''$ or disjoint from $U_i''$. 

So indeed $\Lscr''$ is cross-free.
To obtain a laminar family, choose $v \in V$ arbitrarily, and for every $U''\in\Lscr''$ with $v\in U''$, replace $U''$ by $V\setminus U''$.
We obtain a laminar family $\Lscr'''$ of $2|J|$ (not necessarily distinct) vertex sets such that $\delta(U)$ is a $J$-cut for all $U\in\Lscr'''$
and every edge is contained in at most two of these cuts.
It is obvious that all the above steps can be performed in polynomial time.
Applying Lemma \ref{lemma:cutpackingfromhalfintegral} concludes the proof.
 \end{proof}

To prove Lemma \ref{lemma:cutpackingfromhalfintegral}, consider the following recursive algorithm:

\medskip
\begin{algorithm}[H]\label{algo:packing}
\SetAlgoVlined
 \KwIn{a planar graph $G=(V,E)$ and a laminar family $\Lscr$ of (not necessarily distinct) nonempty subsets of $V$ such that 
$G[U]$ and $G[V\setminus U]$ are connected for all $U\in\Lscr$ and 
every edge is contained in at most two of the cuts $\delta(U)$, $U\in\Lscr$. }
 \KwOut{a subset $\Lscr'\subseteq \Lscr$ such that the cuts $\delta(U)$, $U\in\Lscr'$, are pairwise disjoint}
 \If(\tcc*[h]{case 1}){the elements of $\Lscr$ are pairwise disjoint}{
 Consider the planar graph $P$ obtained by contracting each $U\in \Lscr$ into a single vertex and deleting all other vertices. Compute a 4-coloring on $P$.
 
 \Return a subset of $\Lscr$ that corresponds to a maximum-cardinality color class
 }
 \If(\tcc*[h]{case 2}){there exist $U_1,U_2\in\Lscr$ such that $U_1=U_2$}{
    \Return $\{U_1\}\cup \textsc{CutPacking}(G,\Lscr\setminus\{U_1,U_2\})$
 }
\tcc{case 3}
 Let $\overline{U}\in \Lscr$ be a set that is not minimal but all sets $U_1,\dots,U_l\in \Lscr$ that are proper subsets of $\overline{U}$ are minimal (inclusionwise).
 
Consider the planar graph $P$ obtained by contracting each $U_i$ into a \emph{normal} vertex, 
deleting all other vertices in $\bar U$, and contracting all vertices outside $\overline{U}$ into a single \emph{special} vertex.

Find a 4-coloring of $P$ and choose a color class $\mathcal{K}$ with the largest number of normal vertices. 
If all color classes contain $\frac{l}{4}$ normal vertices, choose the one that contains the special vertex.
Let $\Lscr'\subset \Lscr$ be the set of $U\in\Lscr$ that correspond to normal vertices in $\mathcal{K}$.

\uIf(\tcc*[h]{case 3a}){$\mathcal{K}$ contains more than $\frac{l}{4}$ normal vertices}{
\Return $\Lscr'\cup \textsc{CutPacking}(\Lscr\setminus \{\overline{U}, U_1,\dots,U_l\})$ }
\Else(\tcc*[h]{case 3b}){
    \Return $\Lscr'\cup \textsc{CutPacking}(\Lscr\setminus \{U_1,\dots,U_l\})$}
\caption{{\sc CutPacking} \label{alg:cutpacking}}
\end{algorithm}

\begin{proof}(Lemma \ref{lemma:cutpackingfromhalfintegral}) \ 
We show the assertion by induction on $|\Lscr|$. 
We say here that a subset $\Lscr'\subseteq \Lscr$ is \emph{stable} if all $\delta(U)$ for $U\in \Lscr'$ are pairwise disjoint. 
We apply Algorithm \ref{alg:cutpacking}.
Note that the graphs $P$ in step 2 and step 7 are indeed planar because we only contract connected vertex sets.
By Theorem~\ref{thm:4color}, the algorithm runs in polynomial time.
We start with the base case of our induction.

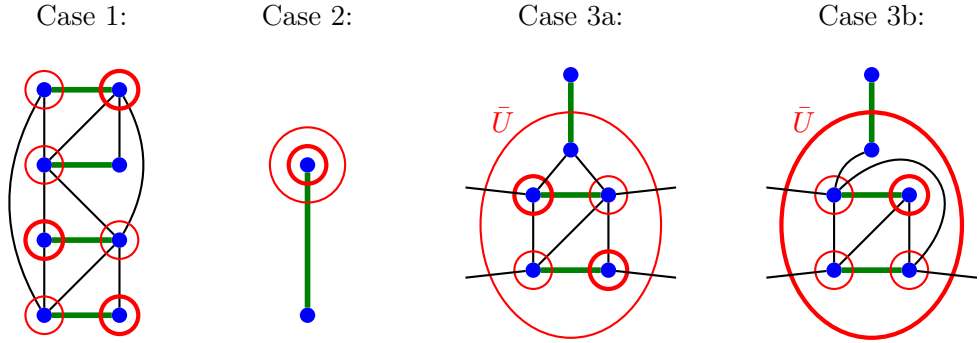
\begin{figure}
  \begin{center}
  \begin{tikzpicture}[scale=1]
  \tikzset{vertex/.style={
  draw=blue,thick,circle,inner sep=0em,minimum size=5pt,fill=blue
  }}
  \tikzset{edge/.style={
  thick
  }}
  \tikzset{j/.style={
  line width = .75mm, darkgreen
  }}
    \tikzset{cut/.style={
  thick, red
  }}
   \tikzset{take/.style={
  ultra thick, red
  }}

 \begin{scope}[xshift=0cm]  
  \node at (0.5,4) {Case 1:};
  \node[vertex] (v00) at (0,0) {};
  \node[vertex] (v10) at (1,0) {};
  \draw[j] (v00) to (v10);
  \node[vertex] (v01) at (0,1) {};
  \node[vertex] (v11) at (1,1) {};
  \draw[j] (v01) to (v11);
  \node[vertex] (v02) at (0,2) {};
  \node[vertex] (v12) at (1,2) {};
  \draw[j] (v02) to (v12);
  \node[vertex] (v03) at (0,3) {};
  \node[vertex] (v13) at (1,3) {};
  \draw[j] (v03) to (v13);
  \draw[edge] (v00) to (v01);
  \draw[edge] (v01) to (v02);
  \draw[edge] (v02) to (v03);
  \draw[edge] (v10) to (v11);
  \draw[edge] (v12) to (v13);
  \draw[edge] (v00) to (v11);
  \draw[edge] (v11) to (v02);
  \draw[edge] (v02) to (v13);
  \draw[edge] (v11) to[bend right] (v13);
  \draw[edge] (v03) to[bend right] (v00);
  \draw[cut] (0,0) circle (.25 and .25);
  \draw[take] (0,1) circle (.25 and .25);
  \draw[cut] (0,2) circle (.25 and .25);
  \draw[cut] (0,3) circle (.25 and .25);
  \draw[take] (1,0) circle (.25 and .25);
  \draw[cut] (1,1) circle (.25 and .25);
  \draw[take] (1,3) circle (.25 and .25);
 \end{scope} 
  
 \begin{scope}[xshift=3cm]  
  \node at (0.5,4) {Case 2:};
  \node[vertex] (w0) at (0.5,0) {};
  \node[vertex] (w1) at (0.5,2) {};
  \draw[j] (w0) to (w1); 
  \draw[cut] (0.5,2) circle (.5 and .5);
  \draw[take] (0.5,2) circle (.25 and .25);
 \end{scope} 
  
 \begin{scope}[xshift=6.5cm]  
  \node at (0.5,4) {Case 3a:};
  \node[vertex] (w0) at (0.5,2.2) {};
  \node[vertex] (w1) at (0.5,3.2) {};
  \draw[j] (w0) to (w1); 
  \node[vertex] (v00) at (0,0.6) {};
  \node[vertex] (v10) at (1,0.6) {};
  \draw[j] (v00) to (v10);
  \node[vertex] (v01) at (0,1.6) {};
  \node[vertex] (v11) at (1,1.6) {};
  \draw[j] (v01) to (v11);
  \draw[edge] (v00) to (v01);
  \draw[edge] (v10) to (v11);
  \draw[edge] (v00) to (v11);
  \draw[cut] (0,0.6) circle (.25 and .25);
  \draw[take] (0,1.6) circle (.25 and .25);
  \draw[take] (1,0.6) circle (.25 and .25);
  \draw[cut] (1,1.6) circle (.25 and .25);
  \draw[cut] (0.5,1.2) circle (1.2 and 1.5);
  \node[red] at (-0.4,2.6) {$\bar U$};
  \draw[edge] (v00) to (-0.9,0.5);
  \draw[edge] (v01) to (-0.9,1.7);
  \draw[edge] (v10) to (1.9,0.5);
  \draw[edge] (v11) to (1.9,1.7);
  \draw[edge] (v01) to (w0);
  \draw[edge] (v11) to (w0);
 \end{scope} 
  
 \begin{scope}[xshift=10.5cm]  
  \node at (0.5,4) {Case 3b:};
  \node[vertex] (w0) at (0.5,2.2) {};
  \node[vertex] (w1) at (0.5,3.2) {};
  \draw[j] (w0) to (w1); 
  \node[vertex] (v00) at (0,0.6) {};
  \node[vertex] (v10) at (1,0.6) {};
  \draw[j] (v00) to (v10);
  \node[vertex] (v01) at (0,1.6) {};
  \node[vertex] (v11) at (1,1.6) {};
  \draw[j] (v01) to (v11);
  \draw[edge] (v00) to (v01);
  \draw[edge] (v10) to (v11);
  \draw[edge] (v00) to (v11);
  \draw[cut] (0,0.6) circle (.25 and .25);
  \draw[cut] (0,1.6) circle (.25 and .25);
  \draw[cut] (1,0.6) circle (.25 and .25);
  \draw[take] (1,1.6) circle (.25 and .25);
  \draw[take] (0.5,1.2) circle (1.2 and 1.5);
  \node[red] at (-0.4,2.6) {$\bar U$};
  \draw[edge] (v00) to (-0.9,0.5);
  \draw[edge] (v01) to (-0.9,1.7);
  \draw[edge] (v10) to (1.9,0.5);
  \draw[edge] (v10) to[out=45,in=315] (1.3,1.9);
  \draw[edge] (1.3,1.9) to[out=135,in=45] (v01);
  \draw[edge] (v01) to[bend left] (w0);
 \end{scope} 
    
    \end{tikzpicture}
  \end{center}    
  \caption{The different cases in the proof of Lemma \ref{lemma:cutpackingfromhalfintegral}.
  \label{fig:packingTcuts}
  }
\end{figure}

\smallskip\noindent
{\bf Case 1:} The elements of $\Lscr$ are pairwise disjoint. 
Let $\Lscr'\subseteq \Lscr$ be the solution returned. 
$\Lscr'$ corresponds to an independent set in $P$, so it is stable 
and has size at least $|V(P)|/4=|\Lscr|/4$. 
Notice that when $\Lscr=\emptyset$, case 1 applies, and the empty set is returned.

\smallskip\noindent
{\bf Case 2:} Some $U_1,U_2\in\Lscr$ are equal. 
 We assume by the induction hypothesis that {\sc CutPacking}$(G,\Lscr\setminus\{U_1,U_2\})$ is stable and 
 {\sc CutPacking}$(G,\Lscr\setminus\{U_1,U_2\})$ has size at least $(|\Lscr|-2)/4$. 
 First, the solution returned has size $\frac{|\Lscr|-2}{4}+1=\frac{|\Lscr|+2}{4}>\frac{|\Lscr|}{4}$.
 To prove that $\{U_1\}\cup \textsc{CutPacking}(G,\Lscr\setminus\{U_1,U_2\})$ is stable, 
 we only have to check that $\delta(U_1)\cap \delta(U)=\emptyset$ for all $U\in \Lscr\setminus\{U_1,U_2\}$. 
 This holds since any edge in $\delta(U_1)$ is also contained in $\delta(U_2)$ and is contained in at most two of the cuts.
 
\smallskip\noindent
{\bf Case 3:} There exists a set $\overline{U}\in \Lscr$ that is not minimal but all sets $U_1,\dots, U_l\in\Lscr$ 
that are proper subsets of $\overline{U}$ are minimal. 
Therefore the sets $U_i$ ($i=1,\ldots,l$) are pairwise disjoint.     
Then consider the graph $P$ constructed in step 7 of the algorithm and a partition of its vertex set into four independent sets.
Now we consider two subcases.

If one of these independent sets contains at least $\frac{l+1}{4}$ normal vertices (case 3a), 
then by induction hypothesis, the solution has size at least $\frac{l+1}{4}+\frac{|\Lscr|-(l+1)}{4}=\frac{|\Lscr|}{4}$.
Otherwise (case 3b) all these independent sets contain exactly $\frac{l}{4}$ normal vertices and we break ties by taking the one that contains the special vertex. 
Using the induction hypothesis again, the solution returned has size at least $\frac{l}{4}+\frac{|\Lscr|-l}{4}=\frac{|\Lscr|}{4}$. 

After using induction hypothesis, this solution is stable because any edge in $\delta(U_i)\cap\delta(U)$ for $U\in \Lscr$ 
such that $U\subseteq V\setminus\overline{U}$ would also be contained in $\delta(\overline{U})$, but no edge is contained in three of the cuts. 
If $\overline{U}\in\textsc{CutPacking}(\Lscr\setminus \{U_1,\dots,U_l\})$, 
then the special vertex is in the independent set $\mathcal{K}$, and hence $\delta(\overline{U})\cap \delta(U_i)=\emptyset$ for all $U_i\in\Lscr'$. 
 \end{proof}

\section{Max-Multiflow Min-Multicut Ratio (Proof of Theorem \ref{thm:mainResultTwo})}
\label{section:LPs}

\subsection{Linear Programming Relaxations and Integrality Gaps}

$\EDP$ consists of finding as many $D$-cycles as possible.
If $\Cscr$ denotes the set of all $D$-cycles, then we
look for a maximum subset of $\Cscr$ whose elements are pairwise disjoint.
The natural fractional relaxation is (along with its linear programming dual):

\begin{figure}[H]
\begin{minipage}[b]{0.49\linewidth}\centering
\begin{align*}
 \max \sum_{C \in \mathcal{C}} f_C &  \\
\sum_{C \in \mathcal{C}: e\in C} f_C  \leq  1 & \qquad \forall e \in D\cupp E\\
f_C  \geq   0  & \qquad \forall C \in \mathcal{C}. 
\end{align*}
\end{minipage}
\begin{minipage}[b]{0.49\linewidth}\centering
\vspace*{-0.25in}
\begin{align*}
\min \sum_{e \in D\cupp E} y_e  & \\
\sum_{e \in C} y_e \geq  1  &  \qquad \forall C \in \mathcal{C}\\
y_e  \geq   0  & \qquad \forall e \in D\cupp E 
\end{align*}
\end{minipage}
\caption{The $D$-cycle packing LP and its dual, the $D$-cycle covering LP.}
\label{fig:LP_Cycle}
\end{figure}

If we go to the planar dual graph and let $\Uscr$ denote the set of vertex sets inducing $D$-cuts, these LPs become
\begin{figure}[H]
\begin{minipage}[b]{0.5\linewidth}\centering
\begin{align*}
 \max \sum_{U \in \mathcal{U}} f_U &  \\
\sum_{U \in \mathcal{U}: e\in\delta(U)} f_U  \leq  1&   \qquad \forall e \in D\cupp E\\
f_U  \geq   0  & \qquad \forall U \in \mathcal{U}. 
\end{align*}
\end{minipage}
\begin{minipage}[b]{0.49\linewidth}\centering
\vspace*{-0.25in}
\begin{align*}
\min \sum_{e \in D\cupp E} y_e  & \\
\sum_{e \in \delta(U)} y_e \geq  1  &  \qquad \forall U \in \mathcal{U}\\
y_e  \geq   0  & \qquad \forall e \in D\cupp E 
\end{align*}
\end{minipage}
\caption{The $D$-cut packing LP and its dual, the $D$-cut covering LP.}
\label{fig:LP_EDP}
\end{figure}

It turns out that the $D$-cut packing LP is equivalent to the nonnegative cycle LP. We need the following well-known 
generalization of Theorem \ref{thm:2packingofTcuts} (see, e.g., \cite{Sebo1997} and the references therein):

\begin{theorem}
\label{thm:algorithmicpackingofTcuts}
For every graph $G=(V,E)$ with capacities $u:E\to\mathbb{R}_{\ge 0}$
and every set $J$ of edges such that $u(C\cap J)\le u(C\setminus J)$
for every cycle $C$, 
one can compute in strongly polynomial time a laminar family $\Lscr$ of vertex sets
with weights $w:\Lscr\to\mathbb{R}_{>0}$ such that
$|\delta(U)\cap J|=1$ for all $U\in\Lscr$,
$\sum_{U\in\Lscr}w(U)=\sum_{e\in J} u(e)$, and
$\sum_{U\in\Lscr: e\in\delta(U)} w(U) \le u(e)$ for all $e\in E$.
If $u$ is integral, than $w$ can be chosen half-integral.
\end{theorem}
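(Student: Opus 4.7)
The plan is to prove this by LP duality for the minimum-$u$-weight $T$-join problem followed by a submodular uncrossing to obtain laminarity. First, observe that the hypothesis $u(C\cap J)\le u(C\setminus J)$ for every cycle $C$ is the weighted extension of Guan's cycle condition \eqref{eq:guancyclecondition}, and a direct exchange argument (via symmetric differences with cycles) shows that it is equivalent to $J$ being a minimum-$u$-weight $\operatorname{odd}(J)$-join. Set $T:=\operatorname{odd}(J)$, so $|T|$ is even. By Edmonds--Johnson \cite{EdmondsJohnson1973}, the LP
\[
\min\Bigl\{\sum_{e\in E} u(e)\,x_e \;:\; x(\delta(U))\ge 1 \text{ for every $T$-cut }U,\ x\ge 0\Bigr\}
\]
has integer optimum $x=\mathbf{1}_J$ of value $u(J)$. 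Strong duality then provides a feasible $w$ for the dual packing LP with $\sum_U w(U)=u(J)$ and $\sum_{U:\,e\in\delta(U)} w(U)\le u(e)$ for every edge $e$. Complementary slackness applied to $(\mathbf{1}_J,w)$ forces $|\delta(U)\cap J|=1$ for every $U\in\operatorname{supp}(w)$, so the support already consists of $J$-cuts and $w$ meets every numerical requirement of the theorem except laminarity.

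The next step is the standard submodular uncrossing. For any crossing pair $U_1,U_2\in\operatorname{supp}(w)$, a parity check on $|U_1\cap U_2\cap T|$ (using that $|T|$ is even) selects a replacement pair, either $(U_1\cap U_2,\,U_1\cup U_2)$ or $(U_1\setminus U_2,\,U_2\setminus U_1)$, both of whose members are $T$-cuts. Transferring $\epsilon:=\min(w(U_1),w(U_2))$ from the original pair to the replacement preserves $\sum_U w(U)$ and, by submodularity of $|\delta(\cdot)|$, preserves dual feasibility; hence $w$ stays dual optimal, and by complementary slackness its support remains a family of $J$-cuts. A quadratic potential such as $\sum_U w(U)\,|U|\,|V\setminus U|$ strictly decreases at each uncrossing, so the procedure terminates in polynomially many steps.

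For half-integrality when $u$ is integer-valued, I would invoke the classical half-integrality of the maximum $T$-cut packing polyhedron (Edmonds--Johnson, Lov\'asz, Seymour; see \cite{Sebo1997}), which specialises to Theorem~\ref{thm:2packingofTcuts} in the unit-capacity case; choosing $\epsilon\in\tfrac12\mathbb{Z}$ at each uncrossing step keeps the half-integral structure stable. For strong polynomiality, I would start from an optimal primal-dual pair $(x,w)$ produced by a strongly polynomial weighted $T$-join algorithm (via the reduction to minimum-weight perfect matching on the complete graph on $T$ with shortest-path edge weights, whose matching dual yields a half-integral packing $w$), and then run the uncrossing above. The most delicate point is carrying out the uncrossing in strongly polynomial time while preserving the half-integral lattice; this is handled by the potential bound combined with the fact that each transfer by a half-integer $\epsilon$ preserves half-integrality throughout.
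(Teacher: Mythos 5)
The paper does not actually prove this theorem: it is stated as a known generalization of Theorem~\ref{thm:2packingofTcuts} with a pointer to \cite{Sebo1997} and the references therein. Your proposal reconstructs the standard argument from that literature, and its architecture is sound: the weighted Guan condition is indeed equivalent to $J$ being a minimum-$u$-weight $\operatorname{odd}(J)$-join (symmetric differences with cycles give both directions); the Edmonds--Johnson/Lov\'asz polyhedral theorem gives an optimal fractional $T$-cut packing of value $u(J)$; complementary slackness forces $|\delta(U)\cap J|=1$ on the support; and uncrossing produces the structured family. So the content is correct, and what you supply is a genuine proof where the paper supplies a citation.

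Two technical points in the uncrossing phase deserve more care. First, the procedure as you describe it yields a \emph{cross-free} family, not a laminar one: for a pair with $U_1\cup U_2=V$ (but $U_1\setminus U_2$, $U_2\setminus U_1$, $U_1\cap U_2$ all nonempty), the parity of $|U_1\cap U_2\cap T|$ is forced to be even (since $|(U_1\cup U_2)\cap T|=|T|$ is even), so your rule selects the pair $(U_1\setminus U_2,\,U_2\setminus U_1)=(V\setminus U_2,\,V\setminus U_1)$; this merely complements the two sets, leaves all cuts unchanged, and leaves your potential $\sum_U w(U)\,|U|\,|V\setminus U|$ invariant, so the loop would not terminate. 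You should uncross only genuinely crossing pairs (all four regions $U_1\setminus U_2$, $U_2\setminus U_1$, $U_1\cap U_2$, $V\setminus(U_1\cup U_2)$ nonempty) --- for those the potential does strictly decrease in whichever branch the parity selects --- and then convert the resulting cross-free family to a laminar one by fixing a vertex $v$ and replacing every $U\ni v$ by $V\setminus U$, exactly as the paper does in the proof of Theorem~\ref{thm:quarter}. Second, the potential decrease is proportional to $\epsilon$, so by itself it bounds the number of iterations only in terms of the capacities, not strongly polynomially; one needs the standard combinatorial refinement (e.g., showing each uncrossing removes a set from the support and bounding how often a given set can re-enter, or the Hurkens--Lov\'asz--Schrijver--Tardos-type bounds). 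Both issues are standard and fixable, and you correctly flag the second as the delicate point.
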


Using this, we can show:

\begin{lemma}
\label{lemma:LPsequivalent}
The $D$-cut packing LP is equivalent to the nonnegative cycle LP~(\ref{LP:NNC}). 
Their values are the same, from an optimum solution to the former we can get an optimum solution to the latter in polynomial time, and vice versa.
\end{lemma}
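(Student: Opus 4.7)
The plan is to exhibit two polynomial-time, value-preserving reductions: from the $D$-cut packing LP to the nonnegative cycle LP and back.

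For the easy direction, given a feasible $f$ for the $D$-cut packing LP, I would define $x_e := \sum_{U \in \Uscr :\, \delta(U)\cap D=\{e\}} f_U$ for each $e \in D$. Because each $U \in \Uscr$ contains exactly one $D$-edge in its boundary, the objective values are equal: $\sum_{e\in D} x_e = \sum_{U\in\Uscr} f_U$. The bound $x_e \le 1$ follows from the edge constraint for $e$. For a cycle $C$ in $G+H$, I would rewrite $\sum_{e\in C\cap D} x_e = \sum_{U\in\Uscr:\, \delta(U)\cap C\cap D\ne\emptyset} f_U$. Since $C$ meets every cut in an even number of edges and such a $U$ already contributes one $D$-edge to $C$, it must also contribute at least one $E$-edge. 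Thus the sum is bounded by $\sum_{U} f_U\,|\delta(U)\cap C\cap E|$, and switching summation order together with the edge constraint for each $e \in C\cap E$ yields the NNC inequality $\sum_{e\in C\cap D} x_e \le |C\cap E|$.

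For the reverse direction, given feasible $x$ for the nonnegative cycle LP, I would set capacities $u(e):=x_e$ for $e\in D$ and $u(e):=1$ for $e\in E$, and take $J:=D$. The NNC constraint is exactly the hypothesis $u(C\cap J)\le u(C\setminus J)$ of Theorem~\ref{thm:algorithmicpackingofTcuts}, which then returns in polynomial time a laminar family $\Lscr$ with weights $w$ satisfying $|\delta(U)\cap D|=1$ for every $U\in\Lscr$, $\sum_{U\in\Lscr} w(U)=\sum_{e\in D} x_e$, and $\sum_{U:\,e\in\delta(U)} w(U)\le u(e)\le 1$ for every edge $e$. Setting $f_U:=w(U)$ would then give a feasible $D$-cut packing solution of the required value, provided each $U$ induces a simple cut.

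The main obstacle is that $\Lscr$ from Theorem~\ref{thm:algorithmicpackingofTcuts} is not guaranteed to consist of sets inducing simple cuts, whereas $\Uscr$ does. I would resolve this by reusing the component-restriction argument from the proof of Theorem~\ref{thm:quarter}: for each $U \in \Lscr$, replace it first by the connected component of $G[U]$ containing one endpoint of the unique $D$-edge in $\delta(U)$, and then by the complement of the connected component of $G[V\setminus U]$ containing the other endpoint. Both replacements keep the unique $D$-edge in the boundary, and because edges between distinct components of $G[U]$ (or of $G[V\setminus U]$) do not lie in $\delta(U)$, restricting to a component can only shrink the cut; hence the capacity bound $\sum_{U:\,e\in\delta(U)} w(U)\le u(e)$ is preserved edgewise. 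This finishes both directions, proves the LP values agree, and all transformations are polynomial time.
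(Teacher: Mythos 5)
Your proof follows the paper's argument essentially verbatim: the same aggregation $x_e=\sum_{U:e\in\delta(U)}f_U$ with the parity argument (a cycle meets each cut evenly, so a $D$-edge in $C\cap\delta(U)$ forces an $E$-edge there too) for the forward direction, and the same invocation of Theorem~\ref{thm:algorithmicpackingofTcuts} with $u(e)=x_e$ on $D$ and $u(e)=1$ on $E$ for the reverse. The only difference is your explicit repair making each $\delta(U)$ a simple cut before setting $f_U:=w(U)$; the paper passes over this detail, and your component-restriction fix (the same device used in the proof of Theorem~\ref{thm:quarter}, which only shrinks $\delta(U)$ while retaining the unique $D$-edge) is correct, so this is a legitimate tightening rather than a divergence.
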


\begin{proof}
For any feasible solution $f$ of the $D$-cut packing LP define a vector $x$ by
$x_e:=$ $\sum_{U\in\Uscr:e\in\delta(U)} f_U$ for $e\in D$.
Then $x$ is a feasible solution to the nonnegative cycle LP because for every cycle $C$ in $G+H$
we have 
$$\sum_{e\in C\cap D} x_e 
= \hspace{-6pt}\sum_{e\in C\cap D} \sum_{\substack{U\in\Uscr:\\e\in\delta(U)}} f_U
= \sum_{U\in\Uscr} f_U |C\cap D\cap\delta(U)|
\le \sum_{U\in\Uscr} f_U |C\cap E\cap\delta(U)|
= \hspace{-6pt} \sum_{e\in C\cap E} \sum_{\substack{U\in\Uscr:\\e\in\delta(U)}} f_U
\le |C\cap E|.$$
Here the first inequality holds because a cycle $C$ and a cut $\delta(U)$ 
intersect in an even number of edges, 
and for $U\in\Uscr$ at most one edge in the intersection belongs to $D$.
We also have $\sum_{e\in D}x_e=\sum_{U\in\Uscr}f_U$.

Conversely, let $x$ be a feasible solution to the nonnegative cycle LP.
Define $u(e):=x_e$ for $e\in D$ and $u(e):=1$ for $e\in E$.
By Theorem \ref{thm:algorithmicpackingofTcuts}
(applied to $G+H$ and $J:=D$), one can compute
a laminar family $\Lscr$ of vertex sets
with weights $w:\Lscr\to\mathbb{R}_{>0}$ such that
$|\delta(U)\cap D|=1$ for all $U\in\Lscr$,
$\sum_{U\in\Lscr}w(U)=\sum_{e\in D} u(e)$, and
$\sum_{U\in\Lscr: e\in\delta(U)} w(U) \le u(e)$ for all $e\in D\cupp E$.
Set $f_{U}:=w(U)$ for $U\in\Lscr$
and $f_U:=0$ otherwise. 
Then $f$ is a feasible solution to the $D$-cut packing LP
with $\sum_{U\in\Uscr}f_U=\sum_{U\in\Lscr}w(U)=\sum_{e\in D} u(e)=\sum_{e\in D}x_e$.
\end{proof}

This implies:

\begin{corollary}
\label{cor:primalgap}
If $G+H$ is planar, 
the integrality gap of the $D$-cut packing LP (and hence the integrality gap of the $D$-cycle packing LP) is at least 2 and at most 32.
\end{corollary}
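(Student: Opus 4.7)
The plan is to get the upper bound of $32$ for free from the analysis of the main algorithm, and to exhibit an explicit small example for the lower bound of $2$.

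For the upper bound, I would invoke Theorem~\ref{thm:mainResultOne} together with Lemma~\ref{lemma:LPsequivalent}. Recall that the proof of Theorem~\ref{thm:mainResultOne} did not merely bound $\text{OPT}$ but actually showed that the output of the overall algorithm has at least $\frac{1}{32}$ times the value of the nonnegative cycle LP~\eqref{LP:NNC}. By Lemma~\ref{lemma:LPsequivalent}, this LP has the same optimum as the $D$-cut packing LP, so this produces an integral $D$-cut packing (in the planar dual) whose value is at least $\frac{1}{32}$ of the $D$-cut packing LP optimum. Under planar duality the cuts map to $D$-cycles in $G+H$, hence they yield an integral feasible solution to the $D$-cycle packing LP of the same value. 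This establishes an upper bound of $32$ on both integrality gaps simultaneously.

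For the lower bound, I would use the $K_4$ example already mentioned in the discussion of the cut condition. Take $G+H=K_4$ with vertex set $\{1,2,3,4\}$, demand edges $D=\{\{1,2\},\{3,4\}\}$ forming a perfect matching, and the other four edges as supply. Any pair of edge-disjoint $D$-cycles would require edge-disjoint paths from $1$ to $2$ and from $3$ to $4$ inside the four supply edges, but any $1$-$2$ path uses both supply edges incident to its intermediate vertex (either $3$ or $4$), leaving no path to the remaining pair. Hence the integral optimum is $1$. On the other hand, the fractional solution that assigns value $\frac{1}{2}$ to each of the four possible $D$-cycles is feasible, since every supply edge is then loaded by two cycles of weight $\frac{1}{2}$. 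This gives a fractional value of $2$, so the integrality gap of the $D$-cycle packing LP is at least $2$. By Lemma~\ref{lemma:LPsequivalent}, the same example (passed to the planar dual) witnesses the same gap for the $D$-cut packing LP.

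The proof requires almost no new work: the upper bound is essentially a bookkeeping statement about what Theorem~\ref{thm:mainResultOne} proves, and the lower bound is a one-example calculation. The only step one has to be slightly careful with is making sure that the $\frac{1}{32}$-factor in the main theorem really holds against the LP value (not only against $\text{OPT}$), which was explicitly noted in its proof. No additional obstacles are anticipated.
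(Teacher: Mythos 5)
Your proof is correct and follows essentially the same route as the paper: the upper bound is read off from the $\frac{1}{32}$-versus-LP-value guarantee in the proof of Theorem~\ref{thm:mainResultOne} combined with Lemma~\ref{lemma:LPsequivalent}, and the lower bound is the $K_4$ perfect-matching example (which the paper only illustrates in Figure~\ref{fig:lb}(a) but you verify explicitly). The only cosmetic slip is that transferring the $K_4$ gap from the $D$-cycle packing LP to the $D$-cut packing LP is just planar duality (the two LPs coincide in the dual), not an application of Lemma~\ref{lemma:LPsequivalent}, which instead relates the $D$-cut packing LP to the nonnegative cycle LP.
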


\begin{proof}
We showed in the proof of Theorem \ref{thm:mainResultOne} that there is an 
algorithm that computes an integral solution of the $D$-cut packing LP of at least $\frac{1}{32}$ times the value of the nonnegative cycle LP.
By Lemma \ref{lemma:LPsequivalent} this implies the upper bound.

The lower bound of 2 is attained for $G+H=K_4$ (the complete graph on 4 vertices) if $D$ is a perfect matching in $G+H$, see Figure \ref{fig:lb} (a).
\end{proof}

For the nonnegative cycle LP the integrality gap could be smaller. We know that it is between $\frac{3}{2}$ (shown by $G+H=K_4$ and $D=\delta(u)$ for an arbitrary vertex $u$, see Figure \ref{fig:lb} (b)) and 
16 (shown by Theorem \ref{thm:s1ands2}).

We now observe that the dual LPs have integrality gap at most 2.

\begin{lemma}
\label{lemma:dualgap}
The integrality gap of the $D$-cut covering LP (and hence the integrality gap of the $D$-cycle covering LP if $G+H$ is planar) is at most 2, 
and it is at least $\frac{3}{2}$ even if $G+H$ is planar.
\end{lemma}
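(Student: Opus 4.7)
My plan is to prove the upper bound by LP rounding exploiting half-integrality of the covering polyhedron, and the lower bound by exhibiting a small planar instance.

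\textbf{Upper bound.} Let $y^*$ be an optimal fractional solution to the $D$-cut covering LP with value $V$. The main claim is that the covering polyhedron $P := \{y \ge 0 : y(\delta(U)) \ge 1 \; \forall U \in \Uscr\}$ has half-integral optimal vertices, so we may assume $y^* \in \{0, \tfrac{1}{2}, 1\}^{D \cupp E}$. Granting this, setting $F := \{e : y^*_e \ge \tfrac{1}{2}\}$ yields $|F| \le 2\sum_e y^*_e = 2V$; and for every $D$-cut $\delta(U)$, the constraint $y^*(\delta(U)) \ge 1$ combined with $y^* \in \{0,\tfrac{1}{2},1\}^{D \cupp E}$ forces some $e \in \delta(U)$ with $y^*_e \ge \tfrac{1}{2}$, so $F \cap \delta(U) \ne \emptyset$ and $F$ is a feasible integer cover. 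The corresponding bound for the $D$-cycle covering LP when $G+H$ is planar then follows by planar duality, which identifies the two LPs edge-for-edge.

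\textbf{Establishing half-integrality.} I plan to derive half-integrality by LP duality combined with Theorem~\ref{thm:algorithmicpackingofTcuts}. The dual $D$-cut packing LP has optimal value $V$, and a suitable application of that theorem (on a rescaled instance with integer capacities) produces a half-integral optimum supported on a laminar family $\Lscr$ of $D$-cuts with total weight $V$. Complementary slackness pins the primal-tight constraints at an extreme $y^*$ to the cuts of $\Lscr$, and since $\Lscr$ is laminar the restricted constraint submatrix has enough structure to force $y^*$ itself to be half-integral. The main obstacle I anticipate is the uncrossing step needed to pass from a generic optimal packing to a laminar one: the property $|\delta(U)\cap D|=1$ that defines $D$-cuts is not preserved under standard cut uncrossing, so the argument likely needs a specialized exchange argument that exploits the fact that each $D$-cut identifies a unique demand edge, possibly by restricting to a minimal-support counterexample and directly contradicting extremality.

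\textbf{Lower bound.} I will exhibit a planar $(G,H)$ whose $D$-cut covering LP value is $2$ while every integer cover has size at least $3$. A natural candidate is a planar gadget built from the $K_4$-with-star configuration underlying the primal LP gap example of Figure~\ref{fig:lb}(b)—which on its own realizes the NNC gap $\tfrac{3}{2}$—amplified by either duplicating copies along a shared face or by inserting parallel supply edges so that three pairwise-overlapping $D$-cuts each share a common supply edge. The LP can then pay $\tfrac{1}{2}$ on each of these shared edges (total $\tfrac{3}{2}$ per gadget, scaled to $2$), while any integer cover must pick at least three edges to hit all cuts. Planarity of the construction and minimality of the integer cover will be checked by an explicit drawing and a short case analysis.
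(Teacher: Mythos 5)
Your upper bound rests on the claim that the $D$-cut covering polyhedron $\{y\ge 0: y(\delta(U))\ge 1\ \forall U\in\Uscr\}$ has half-integral optimal vertices, and this is exactly the step that is missing and almost certainly false. For covering LPs of survivable-network-design type, half-integrality of extreme points is a known-to-fail strengthening of Jain's theorem: what is true is only that \emph{some} coordinate of an extreme point is at least $\tfrac12$, which is why one must round \emph{iteratively} rather than in one shot. Your proposed derivation via complementary slackness with the laminar packing of Theorem~\ref{thm:algorithmicpackingofTcuts} does not close this gap: complementary slackness only identifies the constraints with positive dual value as tight, whereas an extreme point of the covering LP is determined by a full-rank system of tight constraints that may include non-laminar cuts with zero dual value and nonnegativity constraints, so the laminar structure of the dual support says nothing about the coordinates of $y^*$. (Note also that the paper's own lower-bound instance has an optimal fractional solution with values $\tfrac13$ and $\tfrac23$.) The paper's route is different and does work: it rewrites the $D$-cut covering LP as a survivable network design LP with requirement $f(U)=2$ for every cut meeting $D$, after adding a zero-cost parallel copy of each demand edge (the variables $z_e$); this $f$ is proper, hence weakly supermodular, and Jain's iterative rounding theorem gives an integral solution of cost at most twice the LP value. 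The reformulation with requirement $2$ and free parallel demand edges is essential, because the naive requirement function ``$f(U)=1$ iff $\delta(U)$ is a $D$-cut'' is not weakly supermodular.

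The lower bound in your proposal is also not established: you describe a ``natural candidate'' gadget to be amplified and ``checked by an explicit drawing and a short case analysis,'' but no concrete instance is exhibited or verified. The paper instead cites a specific known example of Cheriyan, Karloff, Khandekar, and K\"onemann for tree augmentation (encoding tree edges as $D$ and links as $E$), whose cut LP is exactly the $D$-cut covering LP and whose integrality ratio is $\tfrac32$. To repair your write-up, replace the half-integrality argument by the SNDP reformulation plus Jain's theorem, and replace the lower-bound sketch by an explicit planar instance (the tree augmentation example suffices).
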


\begin{proof}
The lower bound follows from an example by Cheriyan, Karloff, Khandekar, and Könemann \cite{Cheriyan2008} for the tree augmentation problem
(let $D$ contain the tree edges and $E$ contain the links of this example, then their LP is equivalent to the $D$-cut covering LP). See Figure \ref{fig:lb} (c).

We now show the upper bound.
The $D$-cut covering LP is equivalent to
\begin{align}
\label{eq:sndplp}
\min \sum_{e \in D\cupp E} y_e  & \\
\sum_{e \in \delta(U)} y_e + \sum_{e\in \delta(U)\cap D} z_e \geq  2  &  \qquad \forall U\subseteq V \text{ with } D\cap\delta(U)\not=\emptyset \notag \\
y_e  \geq   0  & \qquad \forall e \in D\cupp E \notag \\
0\le z_e \le 1 &  \qquad \forall e \in D \notag
\end{align}
because in \eqref{eq:sndplp} we can assume that $z_e=1$ for all $e\in D$, and then it is exactly the same as the $D$-cut covering LP.
Now \eqref{eq:sndplp} is a survivable network design LP of the type
\begin{align*}
\min \sum_{e\in E'} c(e) x_e & \\
\sum_{e \in \delta(U)} x_e \geq  f(U)  &  \qquad \forall U\subseteq V \notag \\
0 \le x_e \le 1  & \qquad \forall e \in E' \notag
\end{align*}
for the graph whose edge set $E'$ consists of an edge $e$ of cost $c(e)=1$ for each $e\in D\cupp E$ 
and another parallel edge $e'=\{v,w\}$ with $c(e')=0$ for each $\{v,w\}\in D$. 
The requirement function $f:2^V\to\mathbb{Z}_{\ge 0}$ is
given by $f(U)=2$ if $D\cap\delta(U)\not=\emptyset$ and $f(U)=0$ otherwise.
This function $f$ is easily seen to be proper (and thus weakly supermodular), and
hence Jain's iterative rounding theorem \cite{Jain2001} tells that there is an
integral feasible solution $(y,z)$ to \eqref{eq:sndplp} of cost at most twice the LP value.
Then $y$ is an integral feasible solution to the $D$-cut covering LP of the same cost.
\end{proof}

\begin{figure*}[t!]
    \centering
    \begin{subfigure}[t]{0.175\textwidth}
        \centering
        \includegraphics[height=1in]{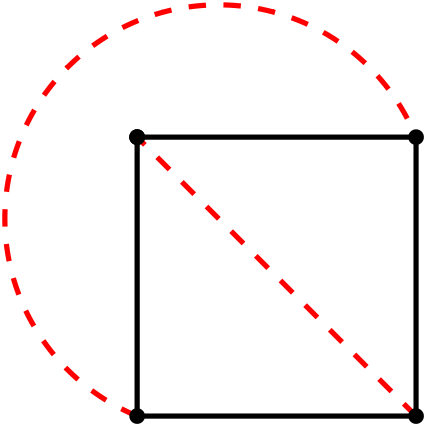}
        \caption{The integrality gap of the $D$-cut packing LP is at least $2$. 
        }
    \end{subfigure}
    \hfill
    \begin{subfigure}[t]{0.175\textwidth}
        \centering
        \includegraphics[height=0.9in]{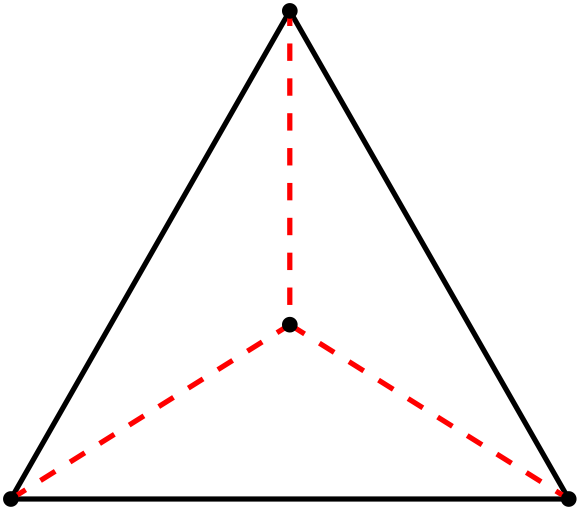}
        \caption{The integrality gap of the nonnegative cycle LP is at least $\frac{3}{2}$. 
        }
    \end{subfigure}
    \hfill
    \begin{subfigure}[t]{0.58\textwidth}
        \centering
        \includegraphics[width=\textwidth]{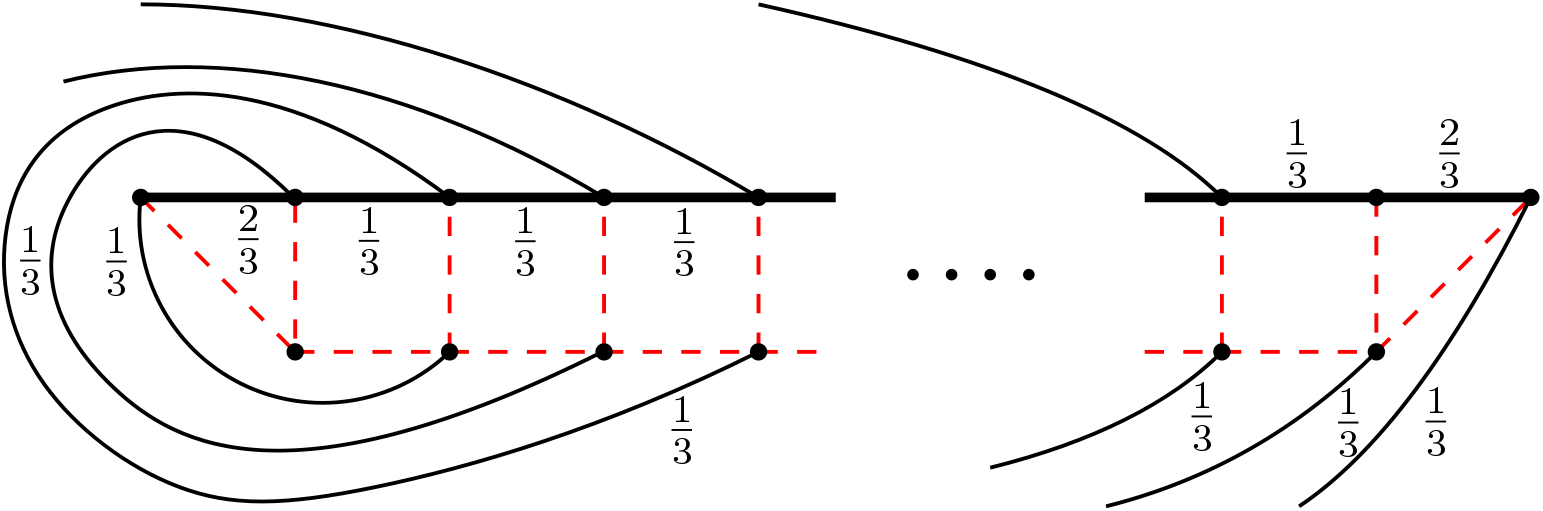}
        \caption{The integrality gap of the $D$-cut covering LP is at least $\frac{3}{2}$.   
        Bold black edges form an optimum integral solution. 
        Values $\frac{1}{3}$ and $\frac{2}{3}$ on the black edges form a feasible fractional solution.}
    \end{subfigure}
    \caption{Examples for lower bounds of integrality gaps. Solid black edges belong to $E$, dashed red edges belong to $D$. }
    \label{fig:lb}
\end{figure*}

We remark that even if just $G$ is planar (not $G+H$) the integrality
gap of the $D$-cycle covering LP is bounded by a (large) constant \cite{tardos1993}. 

\subsection{Multiflows and Multicuts}

A natural generalization of $\EDPs$ takes as input,
in addition to $G$ and $H$, a capacity function
$u:D\cupp E\to\mathbb{Z}_{>0}$ and asks for a maximum
number of $D$-cycles such that each edge $e\in D\cupp E$ belongs to at most $u(e)$ of them.
If $u(e)=1$ for all $e\in D\cup E$, this is $\EDPs$.

This generalization reduces to $\EDPs$ by replacing each edge $e$ by $u(e)$ parallel edges of unit capacity. 
In the planar dual, this corresponds to replacing $e$ by a path with $u(e)$ edges. 
If $u$ is given in binary representation, this reduction should of course not be performed explicitly.
Nevertheless our algorithm can easily be generalized to run in polynomial time for general capacities. 
A dual of a weighted planar graph can be coded by a weighted planar graph where an edge $e$ of weight $w_e$ represents a path of length $w_e$. 
All the algorithms involved in proving Theorem \ref{thm:s1ands2} can be easily extended to weighted instances. 
Once we have a weighted feasible solution $J$ that respects the non-negative cycle condition, we apply Theorem \ref{thm:algorithmicpackingofTcuts} 
to get a (weighted) laminar family that satisfies the requirements of Lemma \ref{lemma:cutpackingfromhalfintegral}. 
Then, the final feasible solution of EDP can be represented by an integral flow of a certain value for some demand edges. 
These flows can be decomposed into (and hence represented by) at most $|E|$ paths by standard flow decomposition.

If $u(e)=\infty$ (or large enough) for all $e\in D$, the problem has been called the maximum integer multiflow problem. 
(Note that $u(e)=\infty$ for all $e\in D$ can in fact be assumed without loss of generality because a demand edge $e=\{v,w\}$ with capacity $u(e)$ 
can be replaced equivalently by a demand edge $\{v,x\}$ of infinite capacity and a supply edge $\{x,w\}$ of capacity $u(e)$, where $x$ is a new vertex.)

If we do not require integrality, the maximum multiflow problem is of course a linear program:

\begin{figure}[H]
\begin{minipage}[b]{0.5\linewidth}\centering
\begin{align*}
 \max \sum_{C \in \mathcal{C}} f_C &  \\
\sum_{C \in \mathcal{C}, C \ni e} f_C  \leq  u(e) & \qquad \forall e \in E\\
f_C  \geq   0  & \qquad \forall C \in \mathcal{C}. 
\end{align*}
\end{minipage}
\hspace{0.5cm}
\begin{minipage}[b]{0.5\linewidth}\centering
\vspace*{-0.25in}
\begin{align*}
\min \sum_{e \in E} u(e) y_e  & \\
\sum_{e \in C\cap E} y_e \geq  1  &  \qquad \forall C \in \mathcal{C}\\
y_e  \geq   0  & \qquad \forall e \in E 
\end{align*}
\end{minipage}
\caption{The multiflow LP and its dual, the multicut LP.}
\label{fig:LP_Multiflow}
\end{figure}

The feasible solutions to the multiflow LP are called multiflows (here we use the form after flow decomposition).
The integral feasible solutions to the dual LP correspond to edge sets $F\subseteq E$ such that deleting $F$ destroys all $D$-cycles. They are called \emph{multicuts}. The \emph{capacity} of a multicut is the total capacity of its edges.
Of particular interest is the worst ratio of the minimum capacity of a multicut and the maximum value of an integer multiflow.
The (integer version of the) famous max-flow min-cut theorem says that this ratio is 1 if $|D|=1$. 
However in general, even if $G$ is sub-cubic and planar the ratio can be as large as $\Theta(|D|)$ \cite{Garg1997}.
Besides when $G$ is a tree (then the ratio is 2 \cite{Garg1997}), when $G$ is planar and has bounded tree-width \cite{Bentz2006} or when $G+H$ is series-parallel \cite{Cornaz11} (then the ratio is $1$), very few cases are known where the ratio can be bounded by a constant.
We can now show a constant upper bound when $G+H$ is planar. This is Theorem \ref{thm:mainResultTwo}, which we restate here:

\mainresulttwo*

\begin{proof}
The integrality gap of the multiflow LP equals the integrality gap of the $D$-cycle packing LP, as the two problems can be reduced to each other:  the gap cannot be smaller because given an instance of the $D$-cycle packing LP, as said above, we can replace a demand edge $\{v,w\}$ by a demand edge $\{v,x\}$ of infinite (or very large) capacity and a supply edge $\{x,w\}$ of capacity 1 to form a multiflow instance. 

Conversely, the gap also cannot be larger since we can reduce the multiflow problem to $D$-cycle packing by replacing every edge $e$ by $u(e)$ parallel edges with unit capacity. 
Note that both reductions preserve planarity.
The integrality gap of the multicut LP equals the integrality gap of the $D$-cycle covering LP by the same argument.
Now Corollary \ref{cor:primalgap} and Lemma \ref{lemma:dualgap} imply the assertion.
\end{proof}

Determining the exact integrality gaps remains an open question.
Without the planarity assumption, all the LPs except for the $D$-cut covering LP have unbounded integrality gap
(for the $D$-cycle packing LP, a well-known example in \cite{Garg1997} shows that 
the gap is in the order of $\Omega(\sqrt{n})$, even when $G$ is planar, subcubic, and all demand pairs lie in the boundary of the outer face of $G$;
for the $D$-cut packing LP and the nonnegative cycle LP, consider $G+H=K_n$ and $D=\delta(v)$ for some vertex $v$;
for the $D$-cycle covering LP, let $G$ be a bounded-degree expander graph with $n$ vertices and $D$ the set of $\frac{n^2}{4}$ vertex pairs with largest distance \cite{Garg1996}).

\section{NP-completeness of \NNC}\label{section:NNCisNPcomplete}

In this section we prove that \NNC is NP-hard. 
In fact, we consider the \NNC decision problem, which takes as input an instance of \NNC and an integer $k$, 
and asks whether there exists a solution of cardinality $k$.

\begin{theorem}
The \NNC decision problem is NP-complete even when $G+H$ is planar.
\label{theorem:NNCisNPcomplete}
\end{theorem}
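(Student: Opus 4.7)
Membership in NP is immediate from Lemma \ref{lemma:solvelp}: given a candidate $D' \subseteq D$, the cycle condition $|C \cap D'| \leq |C \cap E|$ for all cycles $C$ is certified (or refuted) in polynomial time by computing a minimum-weight $\emptyset$-join in $(V, D \cupp E)$ under weights $-1$ on $D'$, $0$ on $D \setminus D'$, and $+1$ on $E$, and checking whether its weight is nonnegative. So the core work is showing NP-hardness.

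For hardness, I would reduce from a planar ``selection'' problem, most naturally maximum independent set in planar cubic graphs, which is NP-hard. For each vertex $v$ of the input graph $G_0$ I would introduce a vertex-gadget carrying a distinguished demand edge $d_v$ whose inclusion in $D'$ represents selecting $v$. For each edge $\{u,v\}$ of $G_0$ I would introduce a small planar edge-gadget designed so that $d_u$, $d_v$, and a single supply edge of the gadget form a short cycle $C_{uv}$; the requirement $|C_{uv} \cap D'| \leq |C_{uv} \cap E|$ then forces at most one of $d_u, d_v$ to be selected, encoding the ``at most one endpoint picked per edge'' constraint. Arranging the gadgets along a fixed planar embedding of $G_0$ keeps $G+H$ planar, and setting the NNC threshold equal to the target independent-set size (plus a constant counting any ``always-selectable'' auxiliary demand edges in the gadgets) gives the claimed equivalence.

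The main obstacle is the gadget design, which must respect several competing requirements simultaneously. First, each vertex-gadget has to interface planarly with up to three incident edge-gadgets; I would give it three ``ports'' laid out in the cyclic order determined by the embedding of $G_0$. Second, and more delicate, one must rule out unintended cycles that traverse several gadgets and could impose extra cycle-condition constraints with no analogue in the independent-set problem, which would make the reduction over-constrained. To neutralize such cycles, I would pad each gadget with sufficiently many supply edges along every alternative routing so that any cycle of $G+H$ other than the intended short constraint cycles $C_{uv}$ contains strictly more supply than demand edges regardless of the choice of $D'$. A case analysis on how an arbitrary cycle enters and leaves the gadgets then reduces the cycle condition for $D'$ precisely to independence of the set $\{v : d_v \in D'\}$, completing the reduction.
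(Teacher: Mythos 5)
Your NP-membership argument is exactly the paper's (weight the edges $\pm 1$ and $0$, compute a minimum-weight $\emptyset$-join, check nonnegativity), and your choice of source problem is essentially the paper's as well: the paper reduces from vertex cover in planar graphs, which is the complement of your planar independent set. So the high-level plan is right.

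However, the hardness half of your proposal has a genuine gap: the reduction is never actually constructed. You correctly identify that ``the main obstacle is the gadget design'' and then list desiderata (three planar ports per vertex, padding against unintended cycles, a case analysis on how cycles traverse gadgets) without exhibiting a gadget or verifying any of these properties --- but in this kind of proof the gadget and its verification \emph{are} the proof. Moreover, the specific shape you propose runs into a concrete obstruction: if the constraint cycle $C_{uv}$ is to consist of exactly the edges $d_u$, $d_v$ and one supply edge, it is a triangle, so $d_u$ and $d_v$ must share an endpoint; a vertex of degree three then forces two of its neighbours' demand edges to share an endpoint of $d_v$ with each other, so the single-demand-edge-per-vertex picture collapses and the demand graph acquires many unintended short cycles that your supply-edge padding cannot lengthen (the triangles must stay short). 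The paper's gadget avoids this by encoding the selection of $v$ not by one edge but by a choice between two stars of demand edges $\widetilde{B}_v$ and $\widetilde{R}_v$ (of sizes $2d$ and $2d-1$, interleaved along a path of supply edges), so that any feasible $D'$ contributes at most $2d_G(v)$ edges per gadget, with equality exactly in the ``not covered'' state; the count $4|E|-k$ then falls out, and feasibility of the intended solution is certified by exhibiting a $D'$-cut packing from an independent set of gadget vertices rather than by a case analysis over cycles. Until you supply and verify a gadget at this level of detail, the reduction is a plan, not a proof.
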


To prove membership in NP, here is a polynomial-time algorithm that, given an instance $G=(V,E)$ and $H=(V,D)$ 
and a subset $D'\subseteq D$ of cardinality $k$, verifies whether $D'$ is a solution: 
As in the proof of Lemma~\ref{lemma:solvelp}, assign weight $1$ to edges of $E$, $-1$ to edges of $D'$, and $0$ to edges of $D\setminus D'$; 
compute a minimum-weight $\emptyset$-join; then $D'$ is a solution if and only if that minimum weight is 0. 

To prove NP-completeness, recall the vertex cover problem:
given a graph $G$ and an integer $k$, it asks whether $G$ has a vertex cover of size $k$.
This problem is well-known to be NP-complete even in planar graphs~\cite{GareyJohnsonStockmeyer}. 
We give a polynomial-time transformation from that problem.

\begin{figure}
    \centering
    \includegraphics[width=14.5cm]{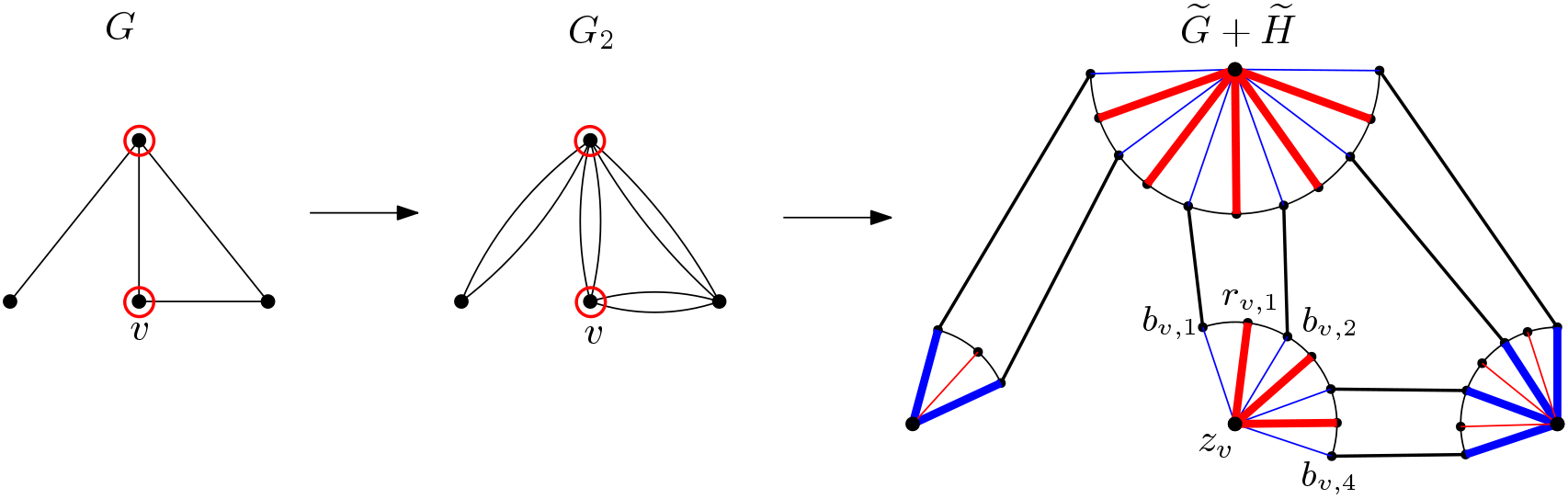}
    \caption{The reduction from vertex cover in planar graphs. 
    On the left-hand side picture, the vertices surrounded by a circuit form a minimum vertex cover $X$ in $G$. 
    On the right-hand side, the supply edges are shown in black, and the demand edges are colored (red or blue). 
    The solution to $\NNCs$  corresponding to $X$ is the set of thicker edges. 
    It is obtained by taking the red edges in each gadget associated to a vertex in $X$ and the blue edges otherwise.}
    \label{fig:nncnphard}
\end{figure}

Let $G=(V,E)$ be a planar graph. 
Let $G_2:=(V,E \cupp  E)$ denote the (multi)graph  where each edge of $G$ is duplicated. 
We construct an instance  $\widetilde{G}+\widetilde{H}=(\widetilde{V}, \widetilde{E}\cupp \widetilde{D})$ 
of \NNC as follows, illustrated in Figure \ref{fig:nncnphard}.

For each vertex $v\in V$ of degree $d=d_G(v)$ in $G$, there is a  \emph{gadget} $(\widetilde{V}_v, \widetilde{E}_v\cupp \widetilde{D}_v)$  built as follows. 
The vertex set $\widetilde{V}_v$ consists of  $4d$ vertices $z_v,b_{v,1},r_{v,1},b_{v,2},\dots, r_{v,2d-1},b_{v,2d}$. 
The supply edge set $\widetilde{E}_v$ consists of edges $\{b_{v,i},r_{v,i}\}$ and $\{r_{v,i},b_{v,i+1}\}$, for $1\le i \le 2d-1$.
The demand edge set $\widetilde{D}_v$ consists of $\widetilde{B}_v\cup \widetilde{R}_v$, 
where  $\widetilde{B}_v:=\{\{z_v,b_{v,i}\} \mid 1\le i \le 2d\}$ and $\widetilde{R}_v:=\{\{z_v,r_{v,i}\} \mid 1\le i \le 2d-1\}$.

To complete the construction of $(\widetilde{G},\widetilde{H})$, starting from a planar embedding of $G_2$, 
replace each vertex $v$ by the corresponding gadget, and each edge $e=\{u,v\}$ in $G_2$ by a supply edge $\{b_{u,i},b_{v,j}\}$ in $\widetilde{E}$, 
where $i$ and $j$ are chosen so that $\widetilde{G}+\widetilde{H}$ is planar. 

The construction of $\widetilde{G}+\widetilde{H}$ can be done in polynomial time. 
Hence the following lemma completes the proof of Theorem \ref{theorem:NNCisNPcomplete}.

\begin{lemma}
Let $G$ be a graph.
There exists a vertex cover in $G$ of size $k$ if and only if there exists a subset $D'\subseteq \widetilde{D}$ of size $4|E|-k$ 
that is a feasible solution to \NNC in $\widetilde{G}+\widetilde{H}$.
\end{lemma}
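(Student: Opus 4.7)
The plan is, given a vertex cover $X \subseteq V$ of size $k$, to define $D'$ gadget by gadget: $D' \cap \widetilde{D}_v := \widetilde{R}_v$ if $v \in X$ and $D' \cap \widetilde{D}_v := \widetilde{B}_v$ if $v \notin X$. A direct count gives $|D'| = \sum_{v \in X}(2d_G(v)-1) + \sum_{v \notin X} 2d_G(v) = 4|E|-k$ as required. To check $\NNCs$-feasibility, I would fix a cycle $C$ in $\widetilde{G}+\widetilde{H}$ and decompose $|C \cap \widetilde{E}| - |C \cap D'|$ into contributions from each gadget together with the $G_2$-supply-edges connecting them. Since $C$ is simple, its restriction to each gadget is a disjoint union of paths whose endpoints are $b$-vertices (``ports''), and at most one path passes through $z_v$. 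Chain-only paths use at least two chain edges each (the two distinct ports sit at odd positions differing by $\ge 2$), so they contribute non-negatively. For the path through $z_v$, the parity alternation of the chain ($b$-vertices at odd positions, $r$-vertices at even ones) is key: if both demand edges of this path lie in $D'$, their chain positions $p_1, p_2$ share parity, while both ports sit at odd positions. When $v \in X$ (both $p_i$ even), this parity mismatch forces at least one chain edge on each side of $z_v$, which cancels the $-2$ from the two demand edges.

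When $v \notin X$ both $p_i$ are odd and may coincide exactly with the two ports, producing a \emph{bad path} $b_{v,j_1}-z_v-b_{v,j_2}$ of weight $-2$ and no chain edges. The plan is to amortize each such $-2$ against the two $G_2$-edges of $C$ that flank the bad path (each contributes $+1$ to $|C \cap \widetilde{E}|$). The vertex-cover property makes this charging injective: a $G_2$-edge incident to a bad gadget $v \notin X$ has its other endpoint in a gadget $v'$ with $\{v, v'\} \in E$, hence $v' \in X$, so the same $G_2$-edge cannot flank a bad path at the other end. Thus distinct bad paths use disjoint pairs of flanking $G_2$-edges, and the aggregated balance $|C \cap \widetilde{E}| - |C \cap D'|$ is non-negative.

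\textbf{Reverse direction.} Conversely, suppose $D'$ is an $\NNCs$-feasible set of size $4|E|-k$. For any two demand edges in $D' \cap \widetilde{D}_v$ at chain positions $p_1 < p_2$, the cycle consisting of those two demand edges and the chain segment between them has $2$ edges in $D'$ and $p_2-p_1$ supply edges, forcing $p_2-p_1 \ge 2$. So the chosen positions form a $2$-separated subset of $\{1, \ldots, 4d_G(v)-1\}$, giving $|D' \cap \widetilde{D}_v| \le 2d_G(v)$, with equality \emph{only} when $D' \cap \widetilde{D}_v = \widetilde{B}_v$ (the unique $2$-separated maximum is the set of odd positions). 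I would set $X := \{v : |D' \cap \widetilde{D}_v| < 2d_G(v)\}$; summing deficits gives $|X| \le \sum_v (2d_G(v) - |D' \cap \widetilde{D}_v|) = 4|E| - |D'| = k$. To see that $X$ is a vertex cover, suppose some $\{u, v\} \in E$ has $u, v \notin X$, so $\widetilde{B}_u \cup \widetilde{B}_v \subseteq D'$. The two parallel copies of $\{u, v\}$ in $G_2$ attach to distinct $b$-ports in each gadget, yielding $i_1 \ne i_2$ and $j_1 \ne j_2$; then the simple cycle
\[
b_{u,i_1} - z_u - b_{u,i_2} - b_{v,j_2} - z_v - b_{v,j_1} - b_{u,i_1}
\]
contains $4$ demand edges (all in $D'$) and only $2$ supply edges, contradicting feasibility.

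\textbf{Main obstacle.} The delicate step is the feasibility check in the forward direction: a single simple cycle may pass through $z_v$ in several bad gadgets simultaneously, interleaved with chain-only paths in other gadgets. Making the amortization rigorous hinges on showing that the flanking $G_2$-edges of distinct bad paths are pairwise disjoint, which is precisely the content of the vertex-cover property. Once this injectivity is in place, everything else reduces to routine parity bookkeeping on each gadget's chain.
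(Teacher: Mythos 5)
Your proof is correct, and the reverse direction is essentially the paper's argument (the same $\le 2d_G(v)$ bound per gadget with equality only for $\widetilde{B}_v$, the same deficit-summing bound on $|X|$, and the same $6$-vertex cycle through the two parallel $G_2$-copies to show $X$ is a cover); your justification via $2$-separated subsets of the chain positions is just a more explicit version of the paper's "otherwise there is a triangle with two edges in $D'$". The forward direction, however, is genuinely different. The paper avoids any cycle decomposition: it takes $\widetilde{I}$ to be the set of non-$z_v$ endpoints of the edges of $D'$, observes that $\widetilde{I}$ is an independent set in $\widetilde{G}+\widetilde{H}$ \emph{because} $X$ is a vertex cover (a $G_2$-edge $\{b_{u,i},b_{v,j}\}$ has at least one endpoint in a gadget where the red edges were chosen), and then gets the cycle condition for every cycle $\widetilde{C}$ at once by summing the local inequalities $|\widetilde{C}\cap\delta(\{\widetilde{v}\})\cap D'|\le|\widetilde{C}\cap\delta(\{\widetilde{v}\})\cap\widetilde{E}|$ over $\widetilde{v}\in\widetilde{I}$ — independence guarantees no supply edge is counted twice. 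This is the same mechanism as in the feasibility proof of $S_1$ (Lemma~\ref{lemma:S1feasible}) and is a few lines long. Your path-decomposition-plus-amortization reaches the same conclusion: the parity bookkeeping on each chain is sound, the only negative local contribution is the length-two path $b_{v,j_1}$--$z_v$--$b_{v,j_2}$ in an uncovered gadget, and the charging to the two flanking $G_2$-edges is injective for exactly the reason you give (the other endpoint of such a $G_2$-edge lies in a covered gadget, which has no bad path). What the paper's route buys is brevity and the observation that $D'$ even admits an explicit $D'$-cut packing by singleton cuts; what yours buys is that the vertex-cover hypothesis is seen to be used only locally at the flanking edges. Two cosmetic remarks: you should also dispose of cycles lying entirely inside one gadget (two same-colour demand edges at $z_v$ force a chain segment of length at least two, so these are fine), and parts of your write-up are still phrased as a plan ("I would\ldots") rather than as a completed argument, though all the needed ideas are present.
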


\begin{proof}
($\Rightarrow$) Let $X$ be a vertex cover of size $k$ in $G$. Define 
$D':= \bigl(\bigcup_{v\in X} \widetilde{R}_v \bigr)\cup \bigl(\bigcup_{v\notin X} \widetilde{B}_v\bigr)$. 
It is easy to check that $D'$ has size $4|E|-k$. We show now that $D'$ is a feasible solution for \NNC. 
Let $\widetilde{I}$ denote the set of $|D'|$ vertices naturally associated to $D'$, taking the endpoint of each edge of $D'$ that is not $z_v$ for any $v\in V$. 
By construction and since $X$ is a vertex cover in $G$, $\widetilde{I}$ is an independent set in $\widetilde{G}+\widetilde{H}$. 
 Hence $\bigl\{\delta(\{\widetilde{v}\}) \mid \widetilde{v}\in\widetilde{I}\bigr\}$ 
 is a $D'$-cut packing.
 For each cycle $\widetilde{C}$ in $\widetilde{G}+\widetilde{H}$, and for each $\widetilde{v} \in \widetilde{I}$, we have
 $|\widetilde{C}\cap \delta(\{\widetilde{v}\})\cap D'|\le |\widetilde{C}\cap \delta(\{\widetilde{v}\}) \cap \widetilde{E}|$ and thus
 $$
 |\widetilde{C}\cap D'|=\sum_{\widetilde{v}\in\widetilde{I}}|\widetilde{C} \cap \delta(\{\widetilde{v}\})\cap D'|\le
 \sum_{\widetilde{v}\in\widetilde{I}}|\widetilde{C}\cap \delta(\{\widetilde{v}\}) \cap\widetilde{E}|\le |\widetilde{C}\cap\widetilde{E}|.
 $$

($\Leftarrow$) Assume now that we are given a feasible solution $D'$ for \NNC in $\widetilde{G}+\widetilde{H}$ of size $4|E|-k$. 
We define $X:=\{v\in V \mid D'\cap \widetilde{D}_v\not=\widetilde{B}_v\}$. 
We show that $X$ is a vertex cover in $G$ with size at least $k$. 
If there was an edge $\{u,v\}\in E$ with $u\notin X$ and $v\notin X$, then there would be a cycle, 
induced by vertices  $\{z_u, b_{u,i}, b_{v,j},z_v,b_{v,j+1},b_{u,i+1}\}$ in $\widetilde{G}+\widetilde{H}$ for some indices $i,j$, 
that contains four edges in $D'$ but only two supply edges. We now prove that $|X|\le k$.

Since $D'$ is a feasible solution, we have $|D'\cap \widetilde{D}_v|\le 2d_G(v)$ for all $v\in V$, 
for otherwise one could find a triangle in $\widetilde{G}_v+\widetilde{H}_v$ with two edges in $D'$.
Moreover, $|D'\cap \widetilde{D}_v| = 2d_G(v)$ only if $D'\cap \widetilde{D}_v=\widetilde{B}_v$, i.e., only if $v\notin X$. 
This implies
$$|X|\le \sum_{v\in V} \left( 2d_G(v)-|D'\cap\widetilde{D}_v| \right) \le 4|E| - |D'| = k.
$$
\end{proof}

\section{Conclusion}

We designed the first constant-factor approximation algorithm for \EDP if $G+H$ is planar.

If all demand edges lie on a single face of a planar embedding of $G$, then the planar dual $H^*$ has only one nontrivial connected component (and isolated vertices).
In this case, any instance satisfying the cut condition has a complete solution \cite{KorachPenn1992}. 
Then a 3-coloring of the outerplanar graph $G[L]$ in Algorithm~\ref{algo:S1} and a small modification of Algorithm~\ref{algo:S2} 
(reducing the upper bounds on $y$ from 1 and 2 to $\frac{1}{2}$ and 1) yields a 4-approximation.
If we could solve \NNC in this case, we would even obtain an exact algorithm. 
This motivates the following open question: can \NNC  be solved in polynomial time if $H$ has only one nontrivial connected component?
%

One attempt, also to reduce the constant factor that we lose when rounding a solution to the 
nonnegative cycle LP~(\ref{LP:NNC}) in general, would be to find a characterisation of optimal LP solutions. 
However, this seems to be difficult. Figure \ref{fig:nothalfintegral} shows that
there are instances in which the unique optimum LP solution is not half-integral.

\begin{figure}
    \centering
    \includegraphics[width=0.6\textwidth]{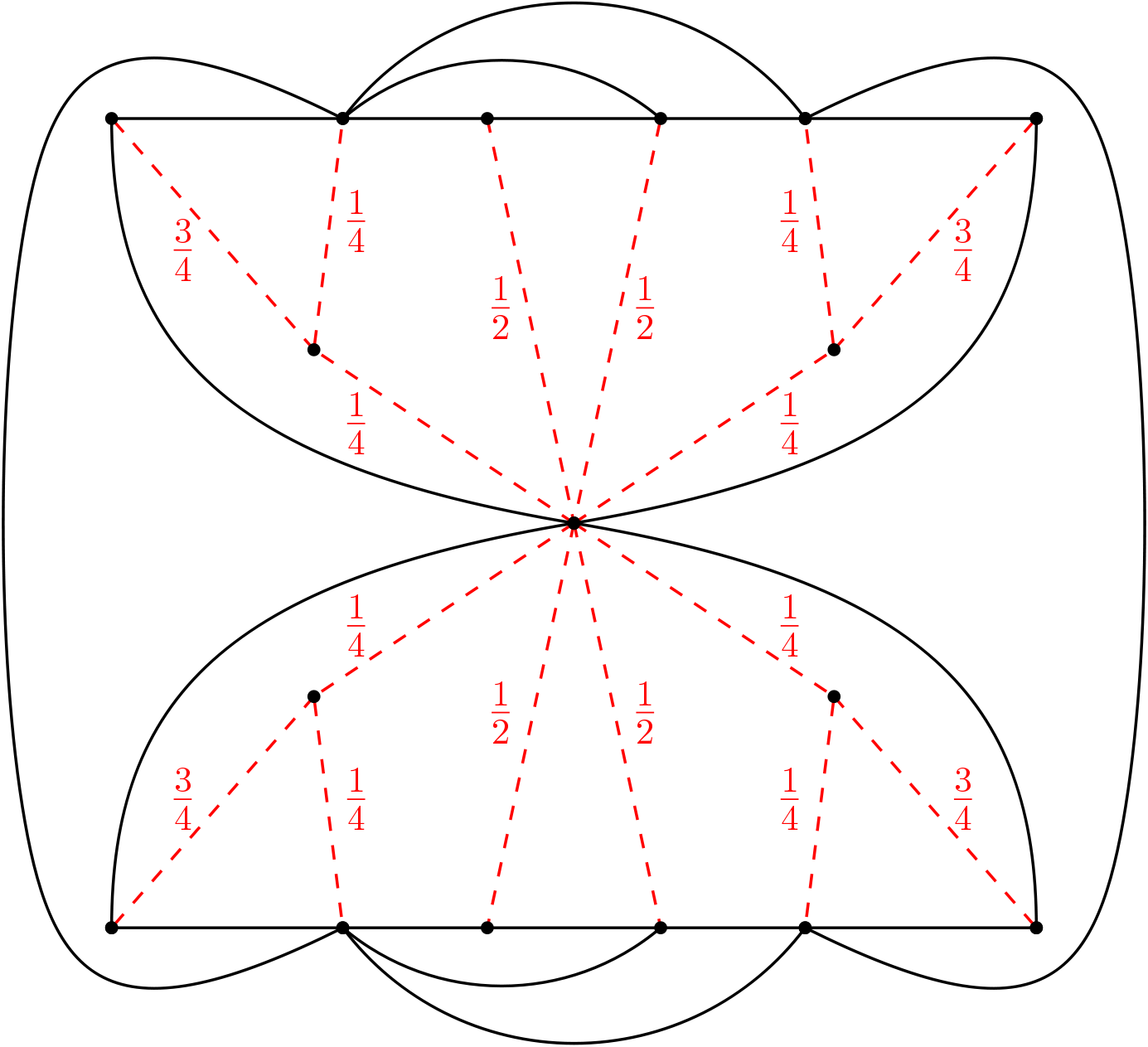}
    \caption{This instance shows that the nonnegative cycle LP~(\ref{LP:NNC}) does not always have half-integral optimal solutions. 
    Black (solid) edges are supply edges, red (dashed) edges represent demand edges together with their value in the unique optimum fractional solution.}
    \label{fig:nothalfintegral}
\end{figure}

\bibliographystyle{siam}
\bibliography{bibliography}


\end{document}